\newcommand{\Izumi}[1]{}
\newcommand{\Izurep}[2]{}
\newcommand{\Kitamura}[1]{}
\newcommand{\Izumig}[1]{}
\theoremstyle{plain}
\theoremstyle{mystyle}
\newtheorem{theorem}{Theorem}
\newtheorem{lemma}{Lemma}
\newtheorem{definition}{Definition}
\newcommand{\Dist}{\mathrm{dist}}
\begin{document}
\title{Low-Congestion Shortcut and Graph Parameters\\\vspace{3mm}}
\author{Naoki Kitamura\footnotemark[1] \and Hirotaka Kitagawa\footnotemark[1] \and Yota Otachi\footnotemark[2] \and Taisuke Izumi\footnotemark[1]\footnote{Nagoya Institute of Technology, Japan}\footnote{Kumamoto University, Japan}}
\date{}
\maketitle
\thispagestyle{empty}
\setcounter{page}{1}

\maketitle
\begin{abstract} 
Distributed graph algorithms in the standard CONGEST model often exhibit
the time-complexity lower bound of $\tilde{\Omega}(\sqrt{n} + D)$ rounds 
for many global problems, where $n$ is the number of nodes and
$D$ is the diameter of the input graph. Since such a lower bound is derived 
from special ``hard-core'' instances, it does not necessarily apply to 
specific popular graph classes such as planar graphs. The concept of 
\emph{low-congestion shortcuts} is initiated by Ghaffari and Haeupler [SODA2016]
for addressing the design of CONGEST algorithms running fast in restricted 
network topologies. Specifically, given a specific graph class $X$, 
an $f$-round algorithm of constructing shortcuts of quality $q$ for 
any instance in $X$ results in $\tilde{O}(q + f)$-round algorithms of solving several fundamental graph problems such as minimum spanning tree and minimum cut, 
for $X$. The main interest on this line is to identify the graph classes 
allowing the shortcuts which are efficient in the sense of breaking 
$\tilde{O}(\sqrt{n}+D)$-round general lower bounds.

In this paper, we consider the relationship between the quality of 
low-congestion shortcuts and three major graph parameters, chordality, 
diameter, and clique-width. The main contribution of the paper is threefold:
(1) We show an $O(1)$-round algorithm which constructs a low-congestion 
shortcut with quality $O(kD)$ for any $k$-chordal graph, and prove that the quality and running time of this construction is nearly optimal up to polylogarithmic factors. (2) We present 
two algorithms, each of which constructs a low-congestion shortcut with 
quality $\tilde{O}(n^{1/4})$ in $\tilde{O}(n^{1/4})$ rounds
for graphs of $D=3$, and that with quality $\tilde{O}(n^{1/3})$ in 
$\tilde{O}(n^{1/3})$ rounds for graphs of $D=4$ respectively. 
These results obviously deduce two MST algorithms running 
in $\tilde{O}(n^{1/4})$ and $\tilde{O}(n^{1/3})$ rounds for $D=3$ and $4$
respectively, which almost close the long-standing complexity gap of the MST construction in small-diameter graphs originally posed by Lotker et al. 
[Distributed Computing 2006]. 
(3) We show that bounding clique-width does not help the construction of good shortcuts by presenting a network topology of clique-width six where the construction of MST is as expensive as the general case.
\end{abstract}

\section{Introduction}
\subsection{Background}
The \emph{CONGEST} is one of the standard message-passing models in
the development of distributed graph algorithms, especially for global problems
such as shortest paths and minimum spanning tree. It is a round-based synchronous
system where each link can transfer $O(\log n)$-bit information per one 
round ($n$ is the number of nodes in the system). Since most of global distributed 
tasks as mentioned above inherently require each node to access the information far apart from itself, it is not
possible to ``localize'' the communication assessed for solving those tasks.
That is, the $\Omega(D)$-round complexity often becomes 
an \emph{universal} lower bound applied to any network topology, 
where $D$ is the diameter of the input
topology. While $D$-round computation is sufficiently long to make some information reach all the nodes in the network, the constraint of limited 
bandwidth precludes the centralized solution that one node collects the information of whole network topology because it results in expensive
$\Omega(n)$-round time complexity. The round complexity of CONGEST algorithms
solving global tasks is typically represented in the form of $\tilde{O}(n^c + D)$ 
or $\tilde{O}(n^cD)$ for some constant $0 \leq c \leq 2$\footnote{$\tilde{O}(\cdot)$ is a notation which ignores $\mathrm{polylog}(n)$ factors from $O(\cdot)$.}, 
and thus the main complexity-theoretic question is how much we can make $c$ 
small (ideally $c =0$, which matches the universal lower bound). Unfortunately, 
achieving such an universal bound is an impossible goal for many problems, e.g., minimum spanning tree (MST), shortest paths, 
minimum cut, and so on. They exhibit the lower bound of 
$\tilde{\Omega}(\sqrt{n} + D)$ rounds for general graphs. 

Most of $\tilde{\Omega}(n^c + D)$-round lower bounds for some $c > 0$ are 
derived from special ``hard-core'' instances, and does not necessarily apply to 
popular graph classes such as planar graphs, which evokes the interest of  
developing efficient distributed graph algorithms for specific graph classes.
In the last few years, the study along this line rapidly made progress, where
the concepts of \emph{partwise aggregation} and \emph{low-congestion shortcuts} play an important role. In the partwise aggregation problem, all the nodes 
in the network is initially partitioned into a number of disjoint connected subgraphs, which we call a \emph{part}. The goal of this problem is to perform a certain kind of distributed tasks independently within all the parts in parallel. The executable tasks cover several standard operations such as 
broadcast, convergecast, leader election, finding minimum, and so on. 
The \emph{low-congestion shortcut} is a framework of solving the partwise aggregation problem, which is initiated by Ghaffari and Haeupler\cite{GH16}.
The key difficulty of the partwise aggregation problem appears when 
the diameter of a part is much larger than the diameter $D$ of the original 
graph. Since the diameter can become $\Omega(n)$ in the worst case, the 
naive solution which performs the aggregation task only by in-part 
communication can cause the expensive $\Omega(n)$-round running time. A low-congestion shortcut is 
defined as the sets of links augmented to each part for accelerating the aggregation task there. Its efficiency is characterized by two quality 
parameters: The \emph{dilation} is the maximum 
diameter of all the parts after the augmentation, and the 
\emph{congestion} is the maximum edge congestion of all edges $e$, where
the edge congestion of $e$ is defined as the number of the parts augmenting $e$. 
In the application of low-congestion 
shortcuts, the performance of an algorithm typically relies on the sum of the dilation and congestion. Hence we simply call the value of 
dilation plus congestion the \emph{quality} of the shortcut. It is known that 
any low-congestion shortcut with quality $q$ and $O(f)$-round construction time
yields an $\tilde{O}(f + q)$-round solution for the partwise aggregation problem,
and $\tilde{O}(f + q)$-round partwise aggregation yields the efficient solutions
for several fundamental graph problems. Precisely, the following meta-theorem holds.
\begin{theorem}[Ghaffari and Haeupler\cite{GH16}, Haeupler and Li\cite{H18}]
\label{thm:shortcut-application}
Let $\mathcal{G}$ be a graph class allowing the low-congestion shortcut with quality $O(q)$ that can be constructed in $O(f)$ rounds in the CONGEST model.
Then there exist three algorithms solving (1) the MST problem in 
$\tilde{O}(f + q)$ rounds, (2) the $(1 + \epsilon)$-approximate minimum cut problem
in $\tilde{O}(f + q)$ rounds for any $\epsilon = \Omega(1)$, and (3) 
$O(n^{O(\log\log n) / \log \beta})$-approximate weighted single-source shortest path problem in $\tilde{\Omega}((f + q)\beta)$ rounds for any $\beta = 
\Omega(\mathrm{polylog}(n))$\footnote{The statement of the weighted single-source shortest path problem is slightly simplified. See \cite{H18} for the details.}.
\end{theorem}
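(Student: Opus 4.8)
The plan is to route everything through a single distributed primitive, \emph{partwise aggregation}, and to show (i) that a quality-$q$ shortcut constructible in $O(f)$ rounds yields a $\tilde{O}(f+q)$-round aggregation routine, and (ii) that each of the three target problems reduces to $\mathrm{polylog}(n)$ invocations of that routine. Step (i) is the heart of the matter. After spending $O(f)$ rounds to build the shortcut, every part $P$ together with its augmenting edges forms a connected subgraph of diameter at most the dilation $d \le q$, while every edge of the host graph is shared by at most $c \le q$ parts. I would fix a spanning tree of depth $O(d)$ inside each augmented part and perform the aggregation as a convergecast followed by a broadcast along these trees. The only conflict is congestion: an edge carrying the traffic of $c$ distinct parts cannot serve all of them simultaneously. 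I would resolve this by the standard random-delay scheduling argument (in the spirit of Leighton--Maggs--Rao): assigning each part an independent random start time in $\{1,\dots,O(c)\}$ and pipelining, one shows that with high probability every edge forwards $O(\log n)$ messages per round, so the whole batch completes in $\tilde{O}(c+d)=\tilde{O}(q)$ rounds. Combined with construction, this gives $\tilde{O}(f+q)$ per aggregation.

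For MST (claim (1)) I would run the Bor\r{u}vka/GHS contraction skeleton on top of this primitive. Maintain the current set of fragments as the parts; in each phase every fragment computes its minimum-weight outgoing edge, which is precisely a min-aggregation over the edges leaving each part and is therefore one partwise-aggregation call. Selected edges are added to the MST and fragments are merged, at which point the parts change and the shortcut must be rebuilt for the next phase. Since each phase at least halves the number of fragments, there are $O(\log n)$ phases, each costing $O(f)$ for reconstruction plus $\tilde{O}(q)$ for the aggregation; the $O(\log n)$ factor is absorbed into the $\tilde{O}$, yielding $\tilde{O}(f+q)$ rounds overall.

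Claims (2) and (3) follow the same template but with heavier reductions. For the $(1+\epsilon)$-approximate minimum cut I would appeal to the tree-packing / random-sampling approach: sample $\mathrm{polylog}(n)$ edge subgraphs so that the global minimum cut is preserved up to a $(1+\epsilon)$ factor, compute a low-weight spanning-tree packing, and for each tree evaluate the minimum cut induced by removing one or two tree edges; each such evaluation is an aggregation along tree paths and hence reduces to partwise aggregation, giving $\tilde{O}(f+q)$ total. For the approximate weighted SSSP of claim (3) I would use the hop-set / scaling machinery of \cite{H18}: repeatedly contract and sparsify the graph so that distances are approximated by paths of few hops, each level being driven by partwise aggregation, with the parameter $\beta$ controlling the granularity of the scaling and thereby trading the approximation ratio $n^{O(\log\log n)/\log\beta}$ against the $\tilde{\Omega}((f+q)\beta)$ running time.

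I expect the genuinely delicate step to be the SSSP reduction: unlike MST and minimum cut, where a logarithmic number of clean aggregation calls suffices, the shortest-path result requires interleaving distance scaling with approximate hop-bounded relaxations and controlling how the approximation error compounds across $O(\log n / \log \beta)$ recursion levels. Making the stated $n^{O(\log\log n)/\log\beta}$ factor emerge, rather than a larger blow-up, is the part that demands care; the remainder of the argument is a fairly mechanical orchestration of the partwise-aggregation primitive.
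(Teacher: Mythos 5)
There is nothing to compare against here: the paper does not prove this statement at all. It is a meta-theorem imported verbatim from Ghaffari--Haeupler \cite{GH16} and Haeupler--Li \cite{H18}, cited and then used as a black box to convert the paper's shortcut constructions into algorithmic consequences. With that caveat, your sketch is a faithful reconstruction of how the cited works actually argue. Your step (i) is exactly the known reduction: a quality-$q$ shortcut gives each part an augmented subgraph of diameter $d\le q$ with per-edge congestion $c\le q$, and random-delay scheduling (the $O((c+d)\log n)$-round scheduling result of Ghaffari \cite{Ghaffari15}, which the paper itself cites right after Definition~\ref{def:shortcut}) turns convergecast/broadcast over all parts into an $\tilde{O}(q)$-round partwise-aggregation primitive. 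Your MST argument (Bor\r{u}vka contraction, one min-aggregation per phase, shortcut rebuilt after each merge, $O(\log n)$ phases) is precisely the route of \cite{GH16}. Two places where your sketch is materially thinner than the real proofs: (a) for the $(1+\epsilon)$-approximate min cut, finding the best cut that $1$- or $2$-respects a packed tree is not a single ``aggregation along tree paths''---it requires the nontrivial machinery of \cite{GH16} adapting Karger's tree-respecting-cut computation to polylogarithmically many aggregation calls; and (b) for SSSP, you correctly identify the error compounding across $O(\log n/\log\beta)$ recursion levels as the crux, but the sketch does not establish the $n^{O(\log\log n)/\log\beta}$ ratio---that derivation is the main technical content of \cite{H18}, not a mechanical orchestration. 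As a blind reconstruction of what this citation stands for, your architecture is the right one; as a self-contained proof it would need those two components filled in.
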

Conversely, if we get a time-complexity lower bound for any problem stated above, then it also applies to the partwise aggregation and low-congestion shortcuts (with 
respect to quality plus construction time). In fact, the 
$\tilde{O}(\sqrt{n} + D)$-round lower bound  of 
shortcuts for general graphs is deduced from  the lower bound of MST. 
On the other hand, the existence of efficient (in the sense of breaking the general lower bound) low-congestion shortcuts is known for several major graph classes, as well as its construction algorithms~\cite{haeupler2018minor,GL18,HIZ16,HIZ16-2,GH16,ghaffari2017distributed}.

\subsection{Our Result}
In this paper, we study the relationship between several 
major graph parameters and the quality of low-congestion shortcuts.
Specifically, we focus on three parameters, that is, (1) chordality, 
(2) diameter, and (3) clique-width. The precise statement of our result
is as follows:
\begin{itemize}
\item  There is an $O(1)$-round algorithm which constructs a low-congestion shortcut with quality $O(kD)$ for any $k$-chordal graph. When $k=O(1)$, its quality matches the $\Omega(D)$-universal lower bound.
\item For $k\leq D$ and $kD\leq \sqrt{n}$, there exists a $k$-chordal graph 
where the construction of MST requires $\tilde{\Omega}(kD)$ rounds.
It implies that the quality plus construction time of our algorithm is 
nearly optimal up to polylogarithmic factors.
\item There exists an algorithm
of constructing a low-congestion shortcut with quality $\tilde{O}(n^{1/4})$
in $\tilde{O}(n^{1/4})$ rounds for any graph of diameter three. 
In addition, there exists an algorithm of constructing a low-congestion shortcut with quality $\tilde{O}(n^{1/3})$ in $\tilde{O}(n^{1/3})$ rounds for any graph of diameter four. These results almost close the long-standing complexity gap of the MST construction in graphs with small diameters, which is originally posed by Lotker et~al.~\cite{LPP06}. 
\item We present a negative instance certifying that bounded clique-width does not help the construction of good-quality shortcuts.
Precisely, we give an instance of clique-width six where the construction of MST 
is as expensive as the general case, i.e., $\tilde{\Omega}(\sqrt{n}+D)$ rounds.
\end{itemize}
Table~\ref{fig:result} summarizes the state-of-the-art upper and lower 
bounds for low-congestion shortcuts. It should be noted
that all the parameters considered in this paper is independent of the other
parameters such that bounding it admits good shortcuts (e.g., 
treewidth and genus), and thus any result above is not a corollary of
the past results.

For proving our upper bounds, we propose a new scheme of 
shortcut construction, called \emph{1-hop extension}, where
each node in a part only takes all the incident edges as the shortcut edges 
of its own part. Surprisingly, this very simple construction admits an 
optimal shortcut for any $k$-chordal graph. For graphs of diameter three or 
four, our algorithm is obtained by combining
the 1-hop extension scheme with yet another algorithm of finding short 
low-congestion paths (i.e., paths of length one or two) connecting two
moderately-large subgraphs. These algorithms are still simple but it is 
far from triviality to bound the quality of constructed shortcuts. The 
analytic part includes several (seemingly) new ideas and may be of 
independent interest. 
\begin{table}[htbp]
\centering
\caption{The quality bounds of Low-Congestion Shortcuts for Specific Graph Classes}
\label{fig:result}
\begin{threeparttable}
\begin{tabular}{|l c c c|}
\hline
Graph Family & Quality & Construction & Lower bound\\
\hline
General & $\tilde{O}(\sqrt{n}+D)$ \cite{SD98} & $\tilde{O}(\sqrt{n}+D)$ \cite{SD98}& $\Omega(\sqrt{n}+D)$ \cite{PR00}\\
Planar &  $\tilde{O}(D)$ \cite{GH16}& $\tilde{O}(D)$ \cite{GH16}& $\tilde{\Omega}(D)$ \cite{GH16}\\
Genus-$g$ & $\tilde{O}(\sqrt{g}D)$ \cite{HIZ16-2}& $\tilde{O}(\sqrt{g}D)$ \cite{HIZ16-2}&$\tilde{\Omega}(\sqrt{g}D)$ \cite{HIZ16-2}\\
Treewidth-$k$ & $\tilde{O}(kD)$ \cite{HIZ16-2}& $\tilde{O}(kD)$ \cite{HIZ16-2}&$\Omega(kD)$ \cite{HIZ16-2}\\
Clique-width-6& -- & -- &$\tilde{\Omega}(\sqrt{n}+D)$ (\textbf{this paper})\\
Expander & $\tilde{O}\left(\tau2^{O\left(\sqrt{\log n}\right)}\right)$ \cite{GL18}\tnote{*}& $\tilde{O}\left(\tau2^{O\left(\sqrt{\log n}\right)}\right)$ \cite{GL18}& -- \\
$k$-Chordal & $O(kD)$ (\textbf{this paper}) & $O(1)$ (\textbf{this paper})& $\tilde{\Omega}(kD)$ (\textbf{this paper})\\
Excluded Minor & $\tilde{O}(D^{2})$ \cite{haeupler2018minor} & $\tilde{O}(D^2)$ \cite{haeupler2018minor}&  -- \\
$D=3$ & $\tilde{O}(n^{1/4})$ (\textbf{this paper}) & $\tilde{O}(n^{1/4})$ (\textbf{this paper})& $\Omega(n^{1/4})$ \cite{lowerbound,LPP06}\\
$D=4$ & $\tilde{O}(n^{1/3})$ (\textbf{this paper}) & $\tilde{O}(n^{1/3})$ (\textbf{this paper})& $\Omega(n^{1/3})$ \cite{lowerbound,LPP06}\\
$5 \leq D\leq \log n$ & -- & -- & $\tilde{\Omega}\left(n^{(D-2)/(2D-2)}\right)$ \cite{lowerbound}\\
\hline
\end{tabular}
\begin{tablenotes}
\item[*]$\tau$ is the mixing time of the network graph $G$.
\end{tablenotes}
\end{threeparttable}
\end{table}
\subsection{Related Work}
The MST problem is one of the most fundamental problems in distributed graph algorithms. It is not only important by itself, but also has many applications
for solving other distributed tasks (e.g., detecting connected components, 
minimum cut, and so on). Hence many researches have tackled the design of efficient MST algorithms in the CONGEST model so far~\cite{GHS83,SD98,garay1998sublinear,pandurangan2017time,pandurangan2018distributed,GP2018,ghaffari2018distributed,haeupler2018round,jurdzinski2018mst}.
The round-complexity lower bound of MST construction is also a central topic in distributed complexity 
theory~\cite{PR00,lowerbound,LPP06,ookawa2015filling,Elkin04,elkin2006unconditional}.
The inherent difficulty of MST construction is 
of solving the partwise aggregation (minimum) problem efficiently. This viewpoint
is first identified by Ghaffari and Haeupler~\cite{GH16} explicitly, as well as an efficient algorithm for solving it in planar graphs. The concept of 
low-congestion shortcuts is newly invented there for encapsulating the 
difficulty of partwise aggregation. Recently, several follow-up papers are
published to extend the applicability of low-congestion shortcuts, which 
break the known general lower bounds of several fundamental graph problems 
in several specific graph classes: This line includes bounded-genus 
graphs~\cite{GH16,HIZ16}, bounded-treewidth graphs\cite{HIZ16}, graphs with 
excluded minors~\cite{haeupler2018minor}, expander graphs~\cite{ghaffari2017distributed,GL18}, and so on (See Table~\ref{fig:result}).

The application of low-congestion shortcuts is not limited only to MST. As stated in Theorem~\ref{thm:shortcut-application}, it also admits efficient solutions for approximate minimum cut and single-source shortest path. A few algorithms recently proposed utilize low-congestion shortcuts as an important building block,
e.g., the depth first search in planar graphs~\cite{H18} and approximate treewidth
(with decomposition)~\cite{li2018distributed}. 
Haeupler et al.~\cite{haeupler2018round} shows a message-reduction scheme of
shortcut-based algorithms, which drop the total number of messages exchanged
by the algorithm into $\tilde{O}(m)$, where $m$ is the number of links.
On the negative side, it is known that the hardness of (approximate) diameter
cannot be encapsulated by low-congestion shortcuts. Abboud et al.~\cite{abboud2016near} shows a hard-core family of unweighted graphs with $O(\log n)$ treewidth where any diameter computation in the CONGEST model requires $\tilde{\Omega}(n)$ rounds.
Since any graph with $O(\log n)$ treewidth admits a low-congestion shortcut of
quality $\tilde{O}(D)$, this result implies that it is not possible to compute 
the diameter of graphs efficiently by using only the property of low-congestion
shortcuts.
 
While our results exhibit a tight upper bound for graphs of diameter three or
four, a more generalized lower bound is known for small-diameter 
graphs.~\cite{lowerbound}. For any $\log n \geq D \geq 3$, 
it is proved that there exists a network topology which incurs the 
$\tilde{\Omega}\left(n^{(D-2)/(2D-2)}\right)$-round time complexity for
any MST algorithm. In more restricted cases of $D = 1$ and $D= 2$, 
Jurdzinski et al.~\cite{jurdzinski2018mst} and Lotker et al.~\cite{LPP06} respectively show $O(1)$-round and $O(\log n)$-round MST algorithms.

\subsection{Outline of the Paper}
The paper is organized as follows: In Section \ref{sec:preliminaries}, we introduce the formal definitions of the CONGEST model,
 partwise aggregation, and low-congestion shortcuts, and other miscellaneous terminologies and notations. In Section \ref{sec:k-chordal}, we show the upper and lower bounds for shortcuts and MST in $k$-chordal graphs. In Section \ref{sec:small_diameter}, we present our shortcut algorithms for graphs of diameter three or 
four. In Section \ref{sec:clique-width}, we prove the hardness result for bounded clique-width graphs. The paper is concluded in Section \ref{sec:conclusion}.

\section{Preliminaries}
\label{sec:preliminaries}
\subsection{CONGEST model}
Throughout this paper, we denote by $[a,b]$ the set of integers at least $a$ and at most $b$. A distributed system is represented by a simple undirected connected graph $G=(V,E)$, where $V$ is the set of nodes and $E$ is the set of edges.
Let $n$ and $m$ be the numbers of nodes and edges respectively, and $D$
be the diameter of $G$. 
Each node has an \emph{ID} from $\mathbb{N}$ (which is represented with 
$O(\log n)$ bits). In the CONGEST model, the computation follows the round-based synchrony. In one round, each node sends messages to its neighbors, receives messages from its neighbors, and executes local computation. It is guaranteed that every message sent at a round is delivered to the destination within the same round. Each link can transfer $O(\log n)$-information (bidirectionally) per one round, and each node can inject different messages to its incident links. Each node has no prior knowledge on the network topology
except for its neighbor's IDs. Given a graph $H$ for which the node and link sets are not explicitly specified, we denote them by $V_H$ and $E_H$ respectively. Let $N(v)$ be the set of nodes that are adjacent to $v$, and 
$N^{+}(v) = N(v)\cup \{v\}$. We define 
$N(S) = \cup_{s \in S} N(s)$ and $N^{+}(S) = \cup_{s \in S} N^+(s)$
for any $S \subseteq V$. For two node subsets $X, Y \subseteq V$, we also define 
$E(X, Y) = \{(u, v) \in E \mid u \in X, v \in Y\}$. If $X$ (resp. $Y$) is a 
singleton $X = \{w\}$, (resp. $Y=\{w\}$), we describe $E(X, Y)$ as $E(w, Y)$ (resp. $E(X, w)$). The \emph{distance} (i.e., the number of edges in 
the shortest path) between two nodes $u$ and $v$ in $G$ is denoted by
$\Dist_{G}(u,v)$. Let $S$ be a path in $G$. With a small abuse of notations, we often treat $S$ as the sequence of nodes or edges representing the path, as 
the set of nodes or edges in the path, or the subgraph of $G$ forming the path.  

\subsection{Partwise Aggregation}
The \emph{partwise aggregation} is a communication abstraction defined 
over a set $\mathcal{P} = \{P_1, P_2, \dots, P_N\}$ of mutually-disjoint 
and connected subgraphs called \emph{parts}, and provides simultaneous 
fast communication among the nodes in each $P_i$. It is formally 
defined as follows:
\begin{definition}[Partwise Aggregation (PA)] 
  Let $\mathcal{P} = \{P_1, P_2, \dots, P_N\}$ be the set of 
  connected mutually-disjoint subgraphs of $G$, and 
  each node $v \in V_{P_i}$ maintains variable $b^i_v$ storing an
  input value $x^i_v \in X$. The output of the partwise aggregation problem 
  is to assign $\oplus_{w \in P_i} x^i_w$ with $b^i_v$ 
  for any $v \in V_{P_i}$, where $\oplus$ is an arbitrary associative and commutative binary operation over $X$.
\end{definition}
The straightforward solution of the partwise aggregation problem is to 
perform the convergecast and broadcast in each part $P_{i}$ independently.
Specifically, we construct a BFS tree for each part $P_i$ (after the selection of 
the root by any leader election algorithm). The time complexity is proportional to the diameter of each part $P_i$, which can be large ($\Omega(n)$ in the worst 
case) independently of the diameter of $G$. 
%
\subsection{$(d,c)$-Shortcut}
As we stated in the introduction, the notion of low-congestion shortcuts is
introduced for quickly solving the partwise aggregation problem 
(for some specific graph classes). The formal definition of $(d, c)$-shortcuts 
is given as follows.
\begin{definition}\label{def:shortcut}[Ghaffari and Haeupler\cite{GH16}]
Given a graph $G=(V,E)$ and a partition $\mathcal{P} = \{P_1, P_2, \dots, P_N\}$ 
of $G$ into node-disjoint and connected subgraphs, we define a $(d,c)$-shortcut
of $G$ and $\mathcal{P}$ as a set of subgraphs $\mathcal{H} = 
\{H_1, H_2, \dots, H_N\}$ of $G$ such that:
\begin{enumerate}
\item For each $i$, the diameter of $P_{i}+H_i$ is at most $d$ ($d$-dilation).
\item For each edge $e\in E$, the number of subgraphs $P_{i}+H_{i}$ containing $e$ is at most $c$ ($c$-congestion).
\end{enumerate}
\end{definition}
The values of $d$ and $c$ for a $(d,c)$-shortcut $\mathcal{H}$ is 
called the \emph{dilation} and \emph{congestion} of $\mathcal{H}$. As a general
statement, a $(d, c)$-shortcut which is constructed in $f$ rounds admits
the solution of the partwise aggregation problem in $\tilde{O}(d+c+f)$ 
rounds~\cite{GH16,Ghaffari15}. Since the parameter $d + c$ asymptotically 
affects the performance of the application, we call the value of $d + c$
the \emph{quality} of $(d, c)$-shortcuts. A low-congestion shortcut 
with quality $q$ is simply called a \emph{$q$-shortcut}.

\subsection{The framework of the Lower Bound}
\label{sec:mst_lowerbound}
\begin{figure}[t]
\begin{center}
 \includegraphics[width=100mm]{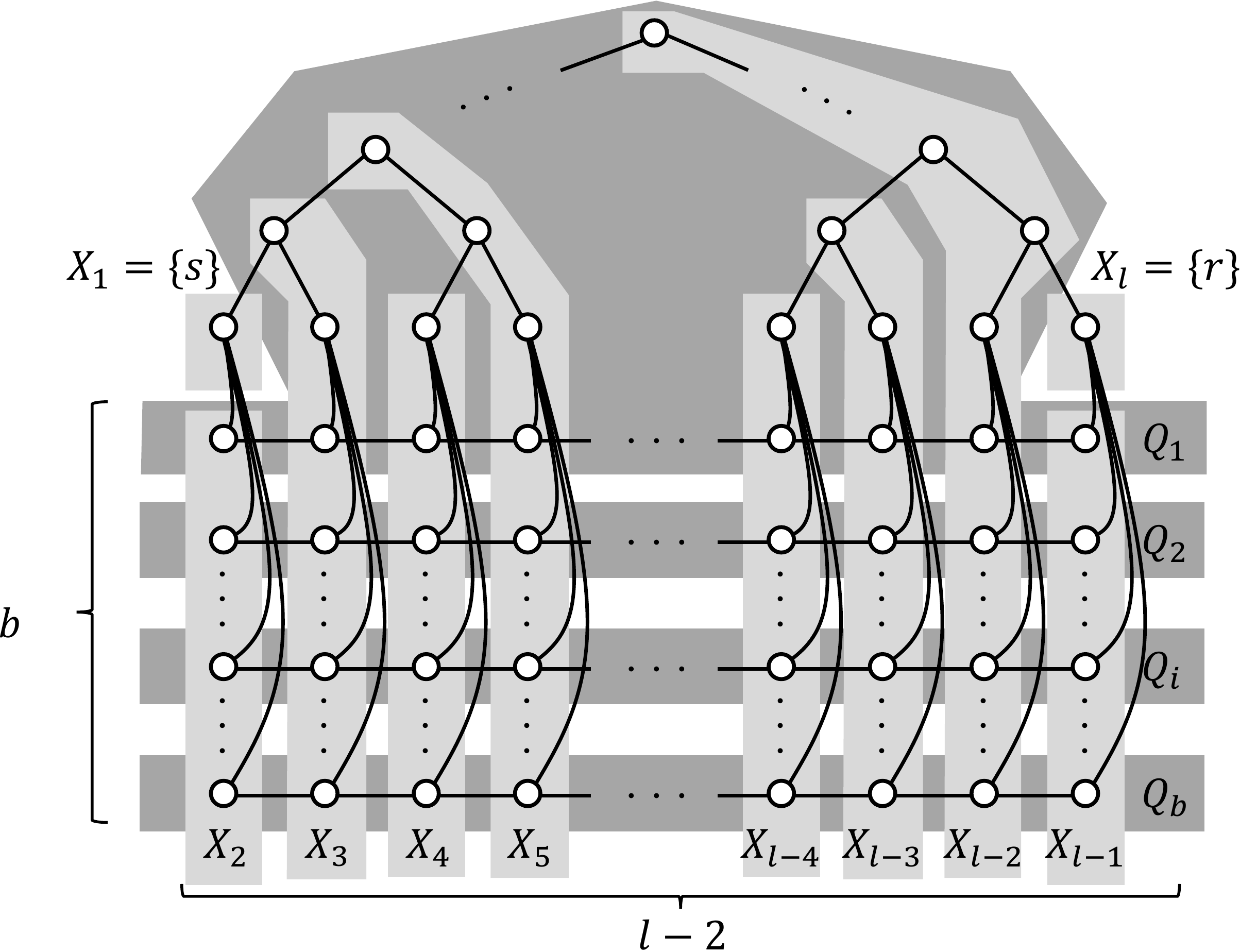}
 \end{center}
 \caption{Example of $\mathcal{G}(O(lb),b,l,O(\log n))$}
 \label{fig:lowerMST}
\end{figure}
To prove the lower bound of MST, we introduce a simplified version of the 
framework by Das Sarma et al.~\cite{lowerbound}. In this framework, 
we consider the graph class $\mathcal{G}(n,b,l,c)$ that is defined below.
A vertex set $X\subseteq V$ is called \emph{connected} if the subgraph induced
by $X$ is connected.
\begin{definition}
For $n,b,c\geq 0$ and $l\geq 3$, the graph class $\mathcal{G}(n,b,l,c)$ is defined as the set of $n$-vertex graph $G = (V, E)$ satisfying the following conditions:
\begin{itemize}
\item (\textbf{C1}) The vertex set $V$ is partitioned 
into $\ell$ disjoint vertex sets 
$\mathcal{X} = \{X_1, X_2, \dots, X_{\ell}\}$ such that $X_1$ and $X_{\ell}$
are singletons (let $X_1 = \{s\}$ and $X_{\ell} = \{r\}$).
\item (\textbf{C2}) The vertex set $V \backslash \{s, r\}$ is partitioned into $b$ disjoint connected sets $\mathcal{Q} = \{Q_1,\dots,Q_{b}\}$ such that 
$|E(X_1, Q_i)| \geq 1$ and $|E(X_{l}, Q_i)| \geq 1$ hold for any 
$1\leq i \leq b$.
\item (\textbf{C3}) Let $R_{i}=\bigcup_{i+1 \leq j \leq l}X_{j}$ and $L_{i}=\bigcup_{0 \leq j \leq l-1-i}X_{j}$.
For $2 \leq i \leq l/2-1$, $|E(R_{i}, N(R_{i})\setminus R_{i-1})| \leq c$ and $|E(L_{i}, N(L_{i})\setminus L_{i-1})| \leq c$.
\end{itemize}
\end{definition}
Figure~\ref{fig:lowerMST} shows the graph that is defined vertex partition $\mathcal{X}$ and $\mathcal{Q}$ for the hard-core instances presented 
in the original proof by Das Sarma et al.~\cite{lowerbound}.
This graph belongs to $\mathcal{G}(O(lb),b,l,O(\log n))$.
For class $\mathcal{G}(n,b,l,c)$, the following theorem holds, which is just
a corollary of the result by Das Sarma et al.~\cite{lowerbound}.
\begin{theorem}[Das Sarma et al.\cite{lowerbound}]
\label{thm:communication_complexity}
For any graph $G \in \mathcal{G}(n,b,l,c)$ and any MST algorithm $A$, 
there exists an edge-weight function $w_{A,G}:E \to \mathbb{N}$ such that 
the execution of $A$ in $G$ requires $\tilde{\Omega}(\min\{b/c,l/2-1\})$ rounds.
This bound holds with high probability even if $A$ is a randomized algorithm.
\end{theorem}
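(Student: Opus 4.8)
The plan is to reduce from two-party set disjointness, reusing the simulation framework of Das Sarma et al.~\cite{lowerbound}; since $\mathcal{G}(n,b,\ell,c)$ is defined exactly by the structural features (C1)--(C3) that this framework exploits, the statement follows once these conditions are seen to supply the required cut and distance bounds. I would let Alice and Bob hold inputs $x,y\in\{0,1\}^{b}$ and build $w_{A,G}$ by making, for every part $Q_i$, one edge of $E(X_1,Q_i)$ carry a weight determined by $x_i$ and one edge of $E(X_\ell,Q_i)$ carry a weight determined by $y_i$, with all remaining edges receiving fixed weights. Using a standard MST gadget the weights are arranged so that a single distinguished reference connection belongs to every MST if and only if no index $i$ satisfies $x_i=y_i=1$; hence, once $A$ terminates, the endpoint of the reference edge can output $\mathrm{DISJ}(x,y)$, and a correct algorithm solves disjointness.

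Next I would convert a $T$-round execution of $A$ into a communication protocol. Fix the middle cut separating the layers $\le i$ from the layers $\ge i+1$ with $i\approx \ell/2$, let Alice simulate all nodes in layers $\le i$ (which contain $s=X_1$, so she knows $x$) and Bob simulate layers $\ge i+1$ (which contain $r=X_\ell$, so he knows $y$). To advance one round, each player must reconstruct the messages entering his side across the cut. By (C3), applied over both the $R_i$- and the $L_i$-families, the number of \emph{long} edges that jump across the middle---edges joining layers $\le i$ to layers $\ge i+1$ while skipping an intermediate layer---is $O(c)$, so all traffic on these fast edges can be exchanged in $O(c\log n)$ bits per round.

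The decisive point is the remaining, potentially $\Omega(b)$ many, consecutive-layer edges between layers $i$ and $i+1$. Here the key invariant I would establish is input-locality: by (C2) every part $Q_i$ furnishes a route between $s$ and $r$ of length $\Omega(\ell)$ using only consecutive-layer steps, so a message carried by $x$ (resp.\ $y$) needs at least $\ell/2-1$ such steps to reach layer $i$ (resp.\ layer $i+1$). Granting this, while $T<\ell/2-1$ the messages on the consecutive-layer cut edges are independent of $(x,y)$, and both players reproduce them identically---without communication---by running the input-free dynamics on the fixed part of $G$. Thus a $T$-round algorithm with $T<\ell/2-1$ yields a protocol of total cost $O(Tc\log n)$ for set disjointness on $b$ bits, and the classical $\Omega(b)$ randomized communication lower bound for disjointness forces $T=\tilde{\Omega}(b/c)$. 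Either $T\ge \ell/2-1$ already, or $T<\ell/2-1$ and then $T=\tilde{\Omega}(b/c)$, so in all cases $T=\tilde{\Omega}(\min\{b/c,\ell/2-1\})$; as the disjointness bound is for bounded-error protocols, the conclusion persists for randomized $A$ with high probability, and the hard input in the support of the disjointness distribution is the promised weight function $w_{A,G}$.

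I expect the main obstacle to be making the input-locality step airtight: one must rule out that the $O(c)$ long edges---which \emph{do} carry input across the middle---feed $x$ or $y$ into a layer-$i$ node by a short detour and thereby render some consecutive-layer cut messages input-dependent before round $\ell/2-1$. This is precisely the interplay of (C2) and (C3): (C2) guarantees the $\Omega(\ell)$-length slow routes, while (C3) caps the fast routes at $O(c)$, so the delicate task is to verify that no admixture of $\le c$ long edges with consecutive edges produces a sub-$(\ell/2)$-hop path delivering an input to a layer-$i$ node. Establishing this locality is what legitimises charging only $O(c\log n)$ bits per round, and hence what cleanly separates the $b/c$ regime from the diameter regime of the bound.
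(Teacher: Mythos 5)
Your overall plan (a set-disjointness reduction combined with a round-by-round simulation whose cost is charged against condition (C3)) is the right skeleton; note that the paper itself never proves this theorem but imports it as a black box from Das Sarma et al.~\cite{lowerbound}, so the comparison is against that known argument. However, your simulation step has a genuine gap, and it is exactly the one you flagged yourself: the ``input-locality'' invariant for a \emph{fixed} middle cut is false for the class $\mathcal{G}(n,b,\ell,c)$. Condition (C3) bounds only the \emph{number} of layer-skipping edges crossing each cut (at most $c$ per index), not where their endpoints lie. Hence even with $c=1$ the class contains graphs having a single edge that joins a neighbor of $s$ (layer $2$) to a node in a middle layer. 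Since both endpoints of a weighted edge know its weight, the neighbors of $s$ have $x$-dependent states already at round $0$, so such an edge makes a middle-layer node $x$-dependent at round $1$, and the consecutive-layer messages crossing your fixed cut become $x$-dependent by round $2$ or $3$ --- far before round $\ell/2-1$. From then on Bob cannot reproduce the up to $\Omega(b)$ consecutive-layer cut messages ``for free,'' Alice must transmit them, the protocol costs $\Omega(b\log n)$ bits \emph{per round}, and no nontrivial lower bound follows. There is nothing to ``verify'' here: the graph class simply does not guarantee the locality your charging scheme needs. (Incidentally, the $\Omega(\ell)$-length of consecutive-layer routes comes from the layering (C1), not from (C2), but that misattribution is harmless compared to the above.)

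The repair is not a sharper locality claim but a different simulation: the \emph{receding} (moving) cut of Das Sarma et al. At round $t$, Alice maintains the round-$t$ states of all nodes in roughly the first $\ell - O(1) - t$ layers, and Bob those in roughly the last $\ell - O(1) - t$ layers; the two regions overlap around the middle for all $t < \ell/2 - O(1)$. When Alice extends her (shrinking) region from round $t$ to round $t+1$, every consecutive-layer neighbor of a node she still tracks lies inside her round-$t$ region, so she computes all consecutive-layer messages herself; the only messages she cannot compute arrive on layer-skipping edges, whose far endpoints lie in Bob's region, and by (C3) there are at most $O(c)$ of them per round, i.e.\ $O(c\log n)$ bits (symmetrically for Bob). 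This argument is pure bit-counting and is completely indifferent to how fast input-dependence propagates --- which is precisely why it survives the skipping edges that kill the fixed-cut version. With it, the rest of your proposal goes through verbatim: a $T$-round execution with $T < \ell/2 - O(1)$ yields an $O(Tc\log n)$-bit bounded-error protocol for disjointness on $b$ bits, so $T = \tilde{\Omega}(b/c)$ in that case, and overall $T = \tilde{\Omega}(\min\{b/c,\ell/2-1\})$, with the randomized statement inherited from the randomized communication lower bound for disjointness.
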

%
\section{Low-Congestion Shortcut for $k$-Chordal Graphs}
\label{sec:k-chordal}
\subsection{$k$-Chordal Graph}
A graph $G$ is $k$-\emph{chordal} if and only if every cycle of length larger
than $k$ has a chord (equivalently, $G$ contains no induced cycle of length 
larger than $k$). In particular, $3$-chordal graphs are simply called \emph{chordal} graphs, which is known to be much related to various intersection graph 
families such as interval graphs\cite{gavril1974intersection,pal2014intersection}. Since $k$-chordal graphs can 
contain the clique of an arbitrary size for any $k \geq 3$, it is never
a subclass of any minor-excluded graphs. Thus no known shortcut algorithm 
works correctly for $k$-chordal graphs. The main results of 
this section are the following two theorems:
\begin{theorem}\label{chordal}
There is an $O(1)$-round algorithm which constructs a $O(kD)$-shortcut for any $k$-chordal graph. 
\end{theorem}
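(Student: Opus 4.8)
The plan is to realize the shortcut by the \emph{1-hop extension} scheme: for every part $P_i$ we let $H_i$ consist of all edges of $G$ incident to $V_{P_i}$, i.e.\ $H_i=E(V_{P_i},V)$. This needs essentially no computation---each node adds its incident edges to its own part, and a single round in which every node learns the part-IDs of its neighbors suffices, so the construction time is $O(1)$. The congestion is immediate: a fixed edge $e=(a,b)$ lies in $P_i+H_i$ only when one of its endpoints belongs to $V_{P_i}$, and since the parts partition $V$, the nodes $a$ and $b$ each lie in exactly one part; hence $e$ is charged to at most two parts and $c\le 2$. All the work is therefore in bounding the dilation by $O(kD)$.

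The first step for the dilation is to record what $P_i+H_i$ can do. Every edge of $P_i+H_i$ is incident to $V_{P_i}$, so along any walk no two consecutive vertices can both lie outside $V_{P_i}$; in particular, any two vertices $s,s'\in V_{P_i}$ with $\Dist_G(s,s')\le 2$ satisfy $\Dist_{P_i+H_i}(s,s')\le 2$, the intermediate vertex (when external) serving as a \emph{bridge}. It thus suffices to connect any $u,v\in V_{P_i}$ by a sequence of $V_{P_i}$-vertices with consecutive $G$-distance at most $2$ whose total length is $O(kD)$. Fix a geodesic $\pi=p_0\cdots p_L$ of $G$ with $p_0=u$, $p_L=v$ and $L\le D$, and a geodesic $W=w_0\cdots w_m$ of the induced subgraph $G[V_{P_i}]$ with the same endpoints, which exists since $P_i$ is connected. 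Because each is a shortest path, both $W$ and $\pi$ are \emph{induced} paths of $G$.

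Next I would apply $k$-chordality to the closed walk $C=W\cup\overline{\pi}$ of length $m+L$ (shared internal vertices only help and may be ignored). Chording $C$ until every interior face has length at most $k$, every chord joins a $W$-vertex to a $\pi$-vertex, since neither $W$ nor $\pi$ has an internal chord. Call the $W$-endpoints of these chords the \emph{anchors} $w_{a_0},\dots,w_{a_r}$ with $a_0=0$, $a_r=m$: each anchor is adjacent in $G$ to some $\pi$-vertex $p_{\beta_t}$, and two anchors consecutive along $W$ bound a face of length at most $k$, so the $W$-subpath between them has length $a_{t+1}-a_t\le k$. Now form an auxiliary graph $T$ on the index set $\{0,1,\dots,L\}$ by adding, for each consecutive anchor pair, the edge $\{\beta_t,\beta_{t+1}\}$. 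Then $T$ is connected and contains $0$ and $L$, so a shortest $0$--$L$ path in $T$ uses at most $L$ edges. Each $T$-edge is realized inside $P_i+H_i$ by walking the corresponding $W$-subpath (cost at most $k$, entirely within $V_{P_i}$), and whenever the $T$-path switches between two anchors sharing a $\pi$-index $b$ we pay a single bridge through $p_b$ (cost at most $2$, since both anchors are $G$-neighbors of $p_b$). Concatenating yields a $u$--$v$ walk in $P_i+H_i$ of length at most $L(k+2)=O(kD)$; external vertices of $P_i+H_i$ contribute only an additive $O(1)$, so the dilation is $O(kD)$ and the quality $d+c=O(kD)$.

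The main obstacle is precisely this last conversion. The natural ``highway'' for making progress from $u$ to $v$ is $\pi$, but $\pi$ is \emph{not} traversable in $P_i+H_i$ because it may contain long runs of external vertices, whereas $W$ lives inside the part but can oscillate and be as long as $\Omega(n)$. The crux is to show that this oscillation is bounded \emph{in the right currency}: chordality forces the anchors to be $k$-dense along $W$ and pins each anchor to a single $\pi$-coordinate, so re-indexing the anchors by that coordinate collapses the long walk $W$ to a graph $T$ on only $L+1=O(D)$ vertices whose individual traversals are cheap. Verifying that anchors sharing a $\pi$-coordinate are always within one bridge---which is what permits the short $T$-path to be lifted back to an honest walk in $P_i+H_i$---is the delicate point that the chordality bound must underwrite, and is where I expect the analysis to require the most care.
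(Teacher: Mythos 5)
Your algorithm, round bound, and congestion bound are the same as the paper's (the 1-hop extension, congestion at most $2$, one round), which is essentially forced by the statement; the substance is the dilation bound, and there your argument is genuinely different from the paper's and is essentially sound. The paper argues \emph{inductively} along the $G$-geodesic: it maintains a vertex $s_{c(j)}$ of the geodesic with $c(j)\geq j$, $\Dist_{G_i}(t_0,s_{c(j)})\leq kj$, and a neighbor on the in-part path, and it advances by extracting a single chordless (hence length at most $k$) cycle from the cycle formed by the two remaining subpaths and one connecting edge; each unit of progress along the $G$-geodesic costs at most $k$. You instead chord the whole $W$-versus-$\pi$ cycle \emph{at once} into non-crossing faces of length at most $k$, observe that the anchors are $k$-dense along $W$ with each anchor pinned to a $\pi$-coordinate, and route through the auxiliary graph $T$ on $L+1=O(D)$ coordinates at cost $k+2$ per edge plus $O(1)$ per bridge. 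Your formulation separates the two resources (anchor density along $W$, progress along $\pi$) more transparently; moreover, taking $W$ to be a geodesic of the induced subgraph $G[V_{P_i}]$ rather than of $P_i$ itself is cleaner than the paper's choice: if $P_i$ is not an induced subgraph, a shortest path of $P_i$ can have $G$-chords, a case the paper's claim that all chords join the two sides quietly ignores.

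Two steps need repair before your sketch is a proof. First, ``shared internal vertices only help and may be ignored'' is not justified as stated: $k$-chordality is a statement about simple cycles, so before chording you must split at the common vertices of $W$ and $\pi$, taken in the order they appear on $\pi$, and run your argument on each segment separately. This is exactly the paper's opening move (its sequence $t_0,\dots,t_{z-1}$ of common vertices), and it does work in your setting: between consecutive common vertices the two subpaths meet only at their endpoints, so each segment yields a simple cycle, and the per-segment bounds $(k+2)\lvert\pi_j\rvert+O(1)$ sum to $O(kD)$ because $\sum_j \lvert\pi_j\rvert = L\leq D$. Second, the claim that consecutive anchors along $W$ bound a common face of length at most $k$ (and that the $\beta$-sequence is a $0$--$L$ walk in $T$) relies on every face having the ``trapezoid'' form of one $W$-subpath, one $\pi$-subpath, and at most two chords; this should be justified by the standard recursion: any chord of the mixed cycle joins $W$ to $\pi$ because both paths are induced in $G$, splitting along a chord produces two cycles of the same form, and the recursion stops at cycles that are chordless or already of length at most $k$, which by $k$-chordality means length at most $k$ in either case. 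With these two repairs your proof is complete and gives the same $O(kD)$ dilation, hence the theorem.
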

\begin{theorem}\label{kchordalkakai}
For $k\leq D$ and $kD\leq \sqrt{n}$, there exists an unweighted $k$-chordal 
graph $G = (V, E)$ where for any MST algorithm $A$, there exists 
an edge-weight function $w_A : E \to \mathbb{N}$ such that the running
time of $A$ becomes $\tilde{\Omega}(kD)$ rounds.
\end{theorem}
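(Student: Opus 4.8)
The plan is to instantiate the communication-complexity framework of Theorem~\ref{thm:communication_complexity}: I will exhibit an unweighted $k$-chordal graph $G$ of diameter $\Theta(D)$ that belongs to $\mathcal{G}(n,b,l,c)$ with $l=\Theta(kD)$, $b=\Theta(kD)$ and $c=O(1)$, so that $\tilde{\Omega}(\min\{b/c,\,l/2-1\})=\tilde{\Omega}(kD)$. Both ``$G$ is $k$-chordal'' and ``$G$ has diameter $\Theta(D)$'' are properties of the topology alone, while the adversarial weight function $w_A$ is supplied by the framework; thus the theorem follows once such a $G$ is built and its three defining conditions \textbf{C1}--\textbf{C3} are verified.

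\textbf{Construction.} I take $b$ vertex-disjoint paths (strands) $Q_1,\dots,Q_b$, each of length $l=\Theta(kD)$, and cut every strand into $\Theta(D)$ consecutive segments of length $s_0:=\lfloor (k-2)/2\rfloor=\Theta(k)$. At each of the $\Theta(D)$ cut levels I introduce a fresh \emph{hub} vertex and make the hub together with the $b$ strand vertices of that level into a clique; consecutive hubs are joined by a single edge, so the hubs form a short ``highway'' path that is responsible for the small diameter. Finally I add the source $s$ and the sink $r$ as the two singleton layers $X_1$ and $X_l$, joining $s$ (resp.\ $r$) to the first (resp.\ last) vertex of every strand. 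The layers $X_i$ of \textbf{C1} are the levels of the strands; the connected sets $Q_i$ of \textbf{C2} are the strands themselves (with all hubs folded into $Q_1$), each touching both $s$ and $r$.

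\textbf{The parameter budget.} Setting $b=\Theta(kD)$ gives $n=\Theta(bl)=\Theta(k^2D^2)$, so $\sqrt{n}=\Theta(kD)$ and the hypothesis $kD\le\sqrt{n}$ is met; for larger $n$ I pad with pendant vertices that affect neither the diameter nor the chordality. Condition \textbf{C3} is where the skip-level cut is read off: strand edges and all clique edges join vertices of equal or adjacent level and hence never jump over a level, so the only edges crossing a given interior level are highway edges, exactly one of which spans each level. This yields $c=O(1)$ and therefore $b/c=\Theta(kD)$, while $l/2-1=\Theta(kD)$, and the framework returns the desired $\tilde{\Omega}(kD)$ bound. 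The diameter is $\Theta(D)$: every vertex lies within $s_0/2+1=O(k)=O(D)$ of some hub and the highway connects all hubs within $\Theta(D)$ hops; here the hypothesis $k\le D$ is used to keep the segment length from dominating $D$.

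\textbf{The crux.} It remains to verify that $G$ is $k$-chordal, i.e.\ that it has no induced cycle of length exceeding $k$. The structural fact I will exploit is that each per-level clique is a \emph{clique separator}: removing it disconnects the strands to its left from those to its right, and a standard argument shows that an induced cycle can meet a clique separator in at most one edge and therefore cannot place vertices on both sides. Applying this to two consecutive separators confines every induced cycle to a single segment-bundle together with its two bounding cliques, and inside such a region the longest induced cycle runs up one strand, crosses a clique edge, and comes back down another strand, giving length exactly $2s_0+2\le k$ (cycles that additionally use a highway edge are shorter). I expect this confinement step to be the main obstacle: it is precisely the long induced cycles formed by two strands that \emph{bypass} the intermediate hubs which must be ruled out, and this is why the chordalizing cliques, rather than a mere diameter-reducing highway, are indispensable. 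The whole difficulty is to insert enough clique structure to cap every induced cycle at $k$ while never creating a single skip-level edge that would inflate $c$ beyond $O(1)$.
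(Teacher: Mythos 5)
Your proposal is correct and is essentially the paper's own construction: the paper's instance $G(k,x,N)$ is precisely your design of $\Theta(kD)$ strands cut into length-$\Theta(k)$ segments by per-level cliques, with a hub highway (the paper's row $v_{1,j}$) providing diameter $\Theta(D)$, fed into the Das Sarma et al.\ framework with $b=\Theta(kD)$, $l=\Theta(kD)$, and $c=O(1)$ because only highway edges skip layers. The one substantive difference is the $k$-chordality verification, where you use a clique-separator confinement lemma while the paper runs an explicit four-case analysis on how a cycle meets the rows and level-cliques; these rest on the same mechanism (an induced cycle meeting a level clique in two non-consecutive vertices acquires a chord, so long cycles are trapped between consecutive cliques and capped at length $2s_0+2\le k$), with your packaging arguably the cleaner of the two.
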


\subsection{Proof of Theorem~\ref{chordal}}
We provide the proof of Theorem~\ref{chordal}.
The construction algorithm is very simple. It follows the \emph{1-hop extension} scheme stated below:
\begin{quote}
For any $V_{P_i} \subseteq V$, node $v \in V_{P_{i}}$ adds each incident edge 
$(v, u)$ 
to $H_i$, and informs $u$ of the fact of $(v, u) \in H_i$.
\end{quote}
Obviously, this algorithm terminates in one round.
Since each node belongs to one part, the congestion of each edge is at most two.
Therefore, the technical challenge in proving Theorem~\ref{chordal} is to show that the diameter of $P_i+H_i$ is $O(kD)$ for any $i \in [1, N]$. In other words,
the following lemma trivially deduces Theorem~\ref{chordal}. 
\begin{lemma}\label{kchordald}
Letting $G_i=P_i + H_{i}$, $\Dist_{G_i}(u, v) \leq kD + 2$ holds for any 
$u, v \in V_{G_i}$.
\end{lemma}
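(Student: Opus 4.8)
The plan is to reduce to the case $u,v \in V_{P_i}$ and then charge at most $k$ edges of $G_i$ to each edge of a shortest $G$-path. First note that the $+2$ in the bound is only a cosmetic correction for external endpoints: since the $1$-hop extension puts every edge incident to $V_{P_i}$ into $G_i$, any $u \in V_{G_i}\setminus V_{P_i}$ has a neighbour $u' \in V_{P_i}$ with $(u,u') \in E_{G_i}$, and likewise for $v$. Hence it suffices to prove $\Dist_{G_i}(u',v') \le kD$ for $u',v' \in V_{P_i}$ and add the two boundary edges.

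For $u,v \in V_{P_i}$, I would fix a shortest path $\pi$ in $G$ (so $|\pi|\le D$) and cut it at the vertices where it meets $V_{P_i}$, obtaining sub-paths $\sigma_t$ whose endpoints $x_t,x_{t+1}$ lie in $V_{P_i}$ but whose internal vertices are all external. Because $P_i$ is connected, I can join $x_t$ and $x_{t+1}$ by a path $Q_t$ living entirely inside $P_i$; since the internal vertices of $\sigma_t$ are external while $Q_t\subseteq V_{P_i}$, the union $Z_t=\sigma_t\cup Q_t$ is a simple cycle. Two facts drive everything: (i) $\sigma_t$, being a sub-path of a geodesic, is chordless; and (ii) consequently every chord of $Z_t$ has at least one endpoint on $Q_t\subseteq V_{P_i}$, so by the $1$-hop extension every chord of $Z_t$ — and every edge of $Q_t$ — lies in $G_i$.

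The heart of the argument is a self-contained claim about such cycles: if $Z$ is a cycle made of a chordless path $\sigma$ and a second path $\rho$ with the same endpoints, and if every edge of $\rho$ and every chord of $Z$ belongs to $G_i$, then the endpoints of $\sigma$ are joined in $G_i$ by a walk of length at most $k|\sigma|$. I would prove this by induction on $|Z|$. If $Z$ is chordless it is an induced cycle, so $k$-chordality forces $|Z|\le k$ and $\rho$ itself is a $G_i$-path of length $\le k\le k|\sigma|$. Otherwise pick a chord $e$; it lies in $G_i$ and splits $Z$ into two strictly shorter cycles. Since $\sigma$ is chordless, $e$ cannot have both endpoints on $\sigma$, so $\sigma$ is split (possibly trivially) into $\sigma_1,\sigma_2$ lying in the two halves, with $e$ joining their common cut vertex into each half's return path. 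The induction hypothesis applied to each half, concatenated at the cut vertex, yields a $G_i$-walk of length $\le k(|\sigma_1|+|\sigma_2|)=k|\sigma|$. Applying this to every $Z_t$ and summing gives $\Dist_{G_i}(u,v)\le \sum_t k|\sigma_t|=k|\pi|\le kD$, as required.

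I expect the delicate point to be verifying that the induction's invariant is genuinely preserved under splitting — namely that every chord of each half-cycle is still a chord of the original $Z$ (hence in $G_i$), rather than an external–external edge of $\sigma$ masquerading as a chord of a half. This comes down to checking that an edge of $Z$ lying in one half cannot reappear as a chord of the other half, which uses that the two halves meet only along $e$. Getting this bookkeeping exactly right, together with the harmless but fiddly degenerate cases (empty $\sigma_t$, chords incident to an endpoint of $\sigma$, or $\pi$ meeting $P_i$ only at its ends), is where the real care is needed; the chordality and geodesic facts noted above are precisely what make each case go through.
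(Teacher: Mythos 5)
Your proof is correct, but it takes a genuinely different route from the paper's. The paper pairs the $G$-geodesic $A$ with a second geodesic $B$ taken \emph{inside} $P_i$, cuts both at their common vertices, and for each segment runs an induction that walks forward along $A$: at each step it forms a cycle from the two path remainders plus one connecting edge, and $k$-chordality (applied to a chordless cycle carved out by a carefully chosen chord) lets it advance at least one vertex along $A$ at $G_i$-cost at most $k$, giving the per-segment bound $\Dist_{G_i}(t_x,t_{x+1})\le k\,\Dist_G(t_x,t_{x+1})$ directly. You instead cut the $G$-geodesic at \emph{every} vertex of $V_{P_i}$, close each external segment into a cycle with an \emph{arbitrary} in-part path, and prove a self-contained cycle lemma by induction on cycle length, splitting at chords, with chordality entering only in the base case. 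Your route is more modular and slightly more robust: it never needs the in-part path to be a geodesic, and it tolerates chords with both endpoints in $V_{P_i}$ (they lie in $G_i$ anyway) by splitting them off, whereas the paper's assertion that every chord of its cycle $C$ joins $S$ to $S'$ ``because both are shortest paths'' is delicate when $P_i$ is not an induced subgraph --- $S'$ is only shortest in $P_i$, so a $G$-edge could join two of its vertices (this is repairable by taking $S'$ to be a geodesic of $G[V_{P_i}]$, whose edges are in $G_i$ by the 1-hop extension). What the paper's approach buys is brevity: once the two geodesics are fixed, the advancing induction avoids the chord-splitting bookkeeping that you rightly flag as the delicate point of your argument; that bookkeeping does go through, precisely because the two halves of a split cycle share only the splitting chord, so every chord of a half-cycle is a chord of the original. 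Both arguments yield the same $kD+2$ bound.
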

%
%

\begin{proof}
We show that $\Dist_{G_i}(u, v) \leq kD$ holds for any $u, v \in V_{P_i}$.
Since any node in $v \in V_{G_i} \setminus V_{P_i}$ is a neighbor of a node in $V_{P_i}$, it obviously follows the lemma. 

Let $A$ be the shortest path from $u$ to $v$ in $G$, and $B$ be that in $P_i$.
We define $T=(t_0,t_1,\dots,t_{z-1})$ as the sequence of nodes in $A \cap B$ 
which are sorted in the order of $A$. By definition, $u = t_0$ and 
$v = t_{z-1}$ holds. The core of the proof is to show 
that $\Dist_{G_i}(t_x,t_{x+1}) \leq k\cdot\Dist_{G}(t_x, t_{x+1})$ for 
$0 \leq x \leq z-1$. Summing up this inequality for all $x$,
we obtain $\Dist_{G_i}(t_0,t_{z-1}) \leq \sum_{1 \leq j \leq z} 
k\Dist_{G}(t_{j-1},t_{j}) = kD$. By symmetry, we only consider the case
of $x = 0$. The case of $x > 0$ is proved similarly.
Let $S = (t_0 = s_0, s_1, \dots, s_{\ell} = t_1)$ be the sub-path of 
$A$, and $S' = (t_0 = s'_0, s'_1, \dots, s'_{\ell'} = t_1)$ be the sub-path of 
$B$. Given a sequence $X$, we denote by $X[i, j]$ its consecutive subsequence 
from the $i$-th element to the $j$-th one in $X$.

We prove 
that for any $0 \leq j \leq \ell$, there exists a node $s_{c(j)} \in S$
such that $c(j) \geq j$, $\Dist_{G_i}(t_0 ,s_{c(j)}) \leq kj$ and $N^{+}(s_{c(j)})\cup S'\neq \emptyset$ hold.
The lemma is obtained by setting $j= \ell$ because then
$s_{c(j)} = s_{\ell} = t_1$ holds. The proof follows the induction on $j$.  
(Basis) If $j=0$, then it holds for $s_{c(j)} = s_0$. (Inductive step) 
Suppose as the induction hypothesis that there exists a node 
$s_{c(j)}$ satisfying $c(j) \geq j$ and $\Dist_{G_i}(t_0, s_{c(j)}) \leq kj$.
If $c(j) > j$, obviously $s_{c(j+1)} = s_{c(j)}$ satisfies the case of $j+1$. 
Thus, it suffices to consider the case of $c(j) = j$. Let $s'_h$ be the
neighbor of $s_{c(j)}$ in $S'$ maximizing $h$, and 
$e = (s_{c(j)}, s'_h)$. We consider the cycle $C$ consisting of 
$S[c(j), \ell]$, $S'[h, \ell']$, and $e$. If the length of $C$ is at most $k$,
obviously we have $\ell' - h \leq k - 1$. Since 
$\Dist_{G_i}(t_0, s_{c(j)}) \leq kj$ holds by the induction hypothesis, 
$s_{c(j+1)} = s_\ell$ satisfies the condition. If the length of $C$ is 
larger than $k$, $C$ has a chord, which connects two nodes respectively 
in $S$ and $S'$ because both $S$ and $S'$ are shortest paths. Let 
$e'=(s_{y},s'_{y'})$ be such a chord making the cycle $C'$ consisting of 
$e$, $e'$, $S[s_{c(j)}, s_y]$, and $S'[s_h, s_y']$ chordless (see 
Figure~\ref{fkchordalpart}). Since $h$ is the maximum, we have $y > c(j)$
because if $y = c(j)$ the edge $e' (\neq e)$ is taken as $e$. 
Due to the property of $k$-chordality, the length of $C'$ is at most $k$, 
and thus the length of path $S'[h, y'] + \{e, e'\}$ from $s_{c(j)}$ to 
$s_{c(x)+y}$ is at most $k-1$, that is, $\Dist_{G_i}(s_{c(j)},s_{y}) 
\leq k$. By the induction hypothesis, we obtain $\Dist_{G_i}(t_0, s_{y})\leq\Dist_{G_i}(t_{0},s_{c(j)})+\Dist_{G_{i}}(s_{c(j)},s_y) \leq k(j+1)$. Since $s'_{y'}$ is the neighbor of $s_{y}$, we have $N^{+}(s_{y})\cup S' \neq \emptyset$. 
Letting $c(j+1) = y$, we obtain the proof for $j + 1$. The lemma holds.
\end{proof}

\begin{figure*}[t]
\begin{center}
 \includegraphics[width=100mm]{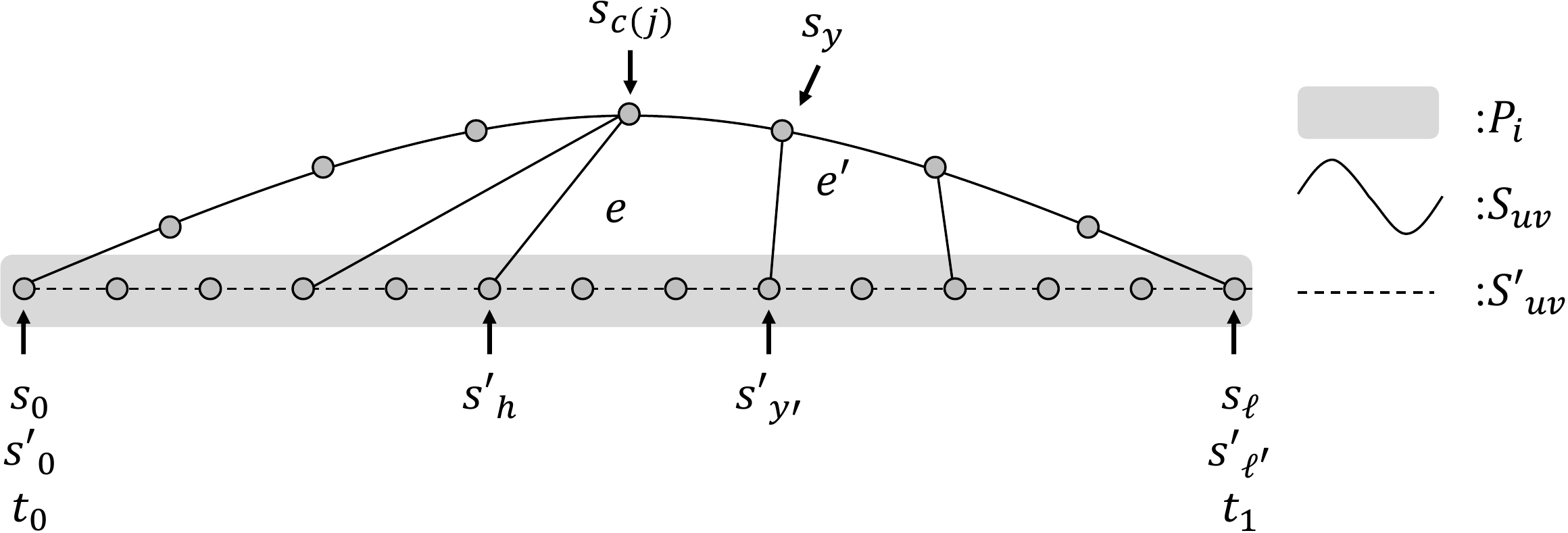}
 \end{center}
 \caption{Proof of Lemma~\ref{kchordald}.}
 \label{fkchordalpart}
\end{figure*}
\subsection{Proof of Theorem~\ref{kchordalkakai}}
We first introduce the instance mentioned in Theorem~\ref{kchordalkakai}. Since it has two additional parameters $x\geq 0$ and $N\geq 2$ as well as $k$, we refer to that instance as $G(k,x,N) = (V(k, x, N), E(k,x,N))$
in the following argument. The parameters $ x $ and $ N $ are adjusted later for obtaining the claimed lower bound. Let $K = k/2 - 1$ for short. The vertex set
and edge set of $G(k, x, N)$ is defined as follows:
\begin{itemize}
    \item $V(k, x, N) = \{v_{1,j} \mid 0 \leq j \leq x\} \cup \{v_{i,j} | 2 \leq i \leq N , 0 \leq j \leq xK \}$.
    \item $E(k, x, N) = E_1 \cup E_2 \cup E_3 \cup E_4$ such that  
    $E_1 = \{\{v_{1,j},v_{1,j+1}\} \mid 0 \leq j \leq x-1\}$, 
    $E_2 = \{\{v_{i,j},v_{i,j+1}\} \mid 2 \leq i \leq N , 0 \leq j \leq xK - 1\}$, $E_3 = \{\{v_{1,j},v_{i,h}\} \mid 2 \leq i \leq N , 0 \leq j \leq x, h = jK\}$, and $E_4 = \{\{v_{i,h},v_{j,h}\} \mid 2 \leq i, j \leq N, i \neq j, h\bmod{K} = 0\}$.  
\end{itemize}
Figure~\ref{kchordalkk} illustrates the graph $G(k, x, N)$.
\begin{figure*}[t]
\begin{center}
 \includegraphics[width=100mm]{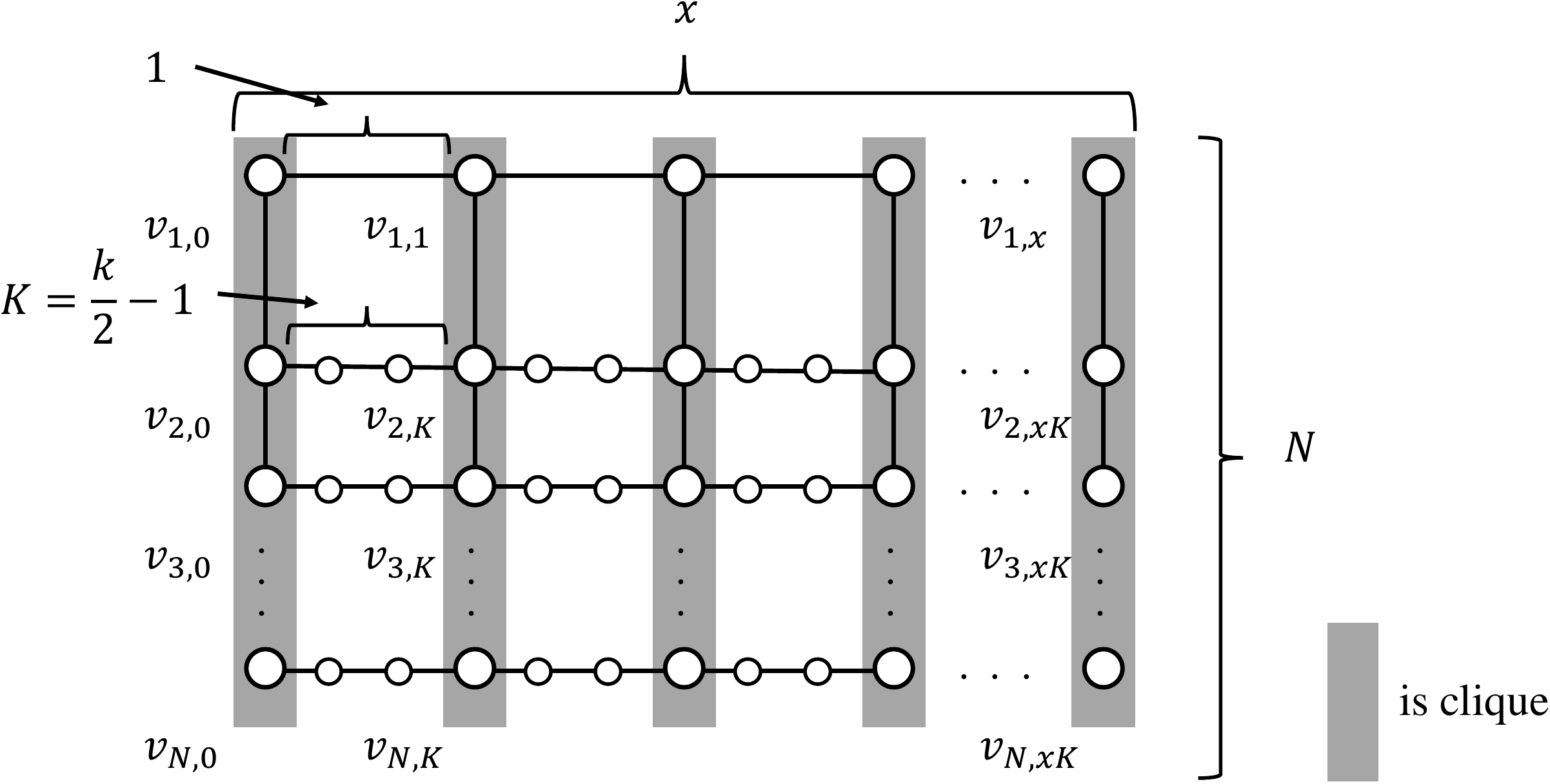}
 \end{center}
 \caption{Example of $k$-chordal graph $G(k,x,N)$.}
 \label{kchordalkk}
\end{figure*}
It is cumbersome to check this graph is $k$-chordal, but straightforward.
One can show the following lemma.
\begin{lemma}\label{circlehodai}
    For $x\geq0$ and $N\geq 2$, $G(k,x,N)$ is $k$-chordal.
\end{lemma}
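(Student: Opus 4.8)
The plan is to prove Lemma~\ref{circlehodai} by reading the global structure of $G(k,x,N)$ as a chain of cliques linked by parallel paths, and then showing that any \emph{induced} cycle is confined to at most two consecutive cliques, where its length cannot exceed $k = 2K+2$.

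First I would reorganize the vertices into \emph{columns}. Assign each vertex a position $p(\cdot)$ by $p(v_{1,j}) = jK$ and $p(v_{i,j}) = j$ for $i \geq 2$, and call column $j$ the set $C_j = \{v_{1,j}\} \cup \{v_{i,jK} : 2 \leq i \leq N\}$ of vertices at position $jK$. A direct check of $E_3$ and $E_4$ shows each $C_j$ induces a clique of size $N$, and every edge joins vertices whose positions differ by at most $K$: clique, $E_3$, and $E_4$ edges keep the position fixed, a spine edge of $E_1$ moves it by exactly $K$ (from $jK$ to $(j+1)K$), and a path edge of $E_2$ moves it by $1$. Between consecutive columns $C_j$ and $C_{j+1}$ the graph therefore consists of the single spine edge $v_{1,j}v_{1,j+1}$ together with the $N-1$ internally disjoint paths $v_{i,jK} - v_{i,jK+1} - \dots - v_{i,(j+1)K}$ of length $K$; I will call these the \emph{links} of the $j$-th \emph{gap}. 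The crucial elementary observation is that every \emph{internal} path vertex (one whose position is not a multiple of $K$) has degree exactly $2$, so once a cycle enters a link it must traverse the whole link; hence each link is crossed atomically, contributing exactly one to the number of cycle-edges placed across the cut $\{p \leq jK\}$ versus $\{p > jK\}$.

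Next I would extract two invariants of any induced cycle $C$ of length at least $4$. (i) \emph{Parity:} for each gap $j$, the cycle crosses the cut separating positions $\leq jK$ from positions $> jK$ an even number of times; since the only crossing edges are the spine edge and the first edges of the links of gap $j$, the number of links of gap $j$ used by $C$ is even. (ii) \emph{Clique capacity:} because each $C_j$ is a clique, any two vertices of $C$ in the same column are adjacent in $G$ and must be consecutive on $C$; as $|C| \geq 4$, no column can contain three vertices of $C$ (three mutually adjacent vertices would force a chord), so each column holds at most two vertices of $C$. Since links and clique edges only connect within a column or across a single gap, the columns met by $C$ form a contiguous interval $[j_{\min}, j_{\max}]$, and any route from $C_{j_{\min}}$ to $C_{j_{\max}}$ must pass through a vertex of every intermediate column.

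The heart of the argument, and the step I expect to be the main obstacle, is ruling out induced cycles that meet three or more columns. Fix a strictly interior column $C_j$ with $j_{\min} < j < j_{\max}$. Both adjacent gaps are crossed, so by (i) each is used at least twice, and the links of gaps $j-1$ and $j$ that $C$ uses account for at least $2 + 2 = 4$ cycle-edges incident to vertices of $C_j$. But $C_j$ holds at most two vertices of $C$ by (ii), each contributing at most two cycle-edges, so exactly two vertices are present and \emph{all four} of their incident cycle-edges are link edges, leaving no clique edge between them on $C$; being nonetheless adjacent in $G$, they produce a chord, contradicting that $C$ is induced. Hence $C$ meets at most two columns. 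The single-column case is a triangle of length $3 \le k$, and in the two-column case (i) together with the capacity bound forces exactly two links across the one gap, closed by one clique edge in each column; since a link has length $K$ and the spine edge length $1$, the cycle length is at most $K + K + 2 = 2K+2 = k$. This exhausts all cases and shows every induced cycle has length at most $k$, i.e. $G(k,x,N)$ is $k$-chordal; throughout I take $k$ even with $k = 2K+2 \geq 4$, matching the definition $K = k/2 - 1$.
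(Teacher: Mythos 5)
Your proof is correct and follows essentially the same route as the paper's own: reorganize $G(k,x,N)$ into column-cliques joined by parallel paths, then perform a case analysis on how many columns an induced cycle meets and how many of its vertices lie in a single column, with the two-column case giving the length bound $K+K+2 = k$. Your parity/edge-counting argument for the three-or-more-column case is actually more airtight than the paper's corresponding case, which asserts a chord between two vertices of an intermediate column without explicitly ruling out that those two vertices are consecutive on the cycle; otherwise the two arguments coincide.
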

\begin{proof}
For simplicity, we give some of the vertices a name $v'_{xy}$ as follows;
\begin{itemize}
\item $v'_{1,j} = v_{1,j} (0 \leq j \leq x)$
\item $v'_{i,j} = v_{i,h} (2 \leq i \leq N , 0 \leq j \leq x , h = jK)$.
\end{itemize}
We define a subset of vertices called \emph{row} and \emph{column}.
The $i$-th row $R_i$ is defined as $R_i=\{v'_{i,j} | 0 \leq j \leq x\}$, and 
the $i$-th column $C_i$ is defined as $C_i=\{v'_{j,i} | 1 \leq j \leq N\}$.

First, we consider the diameter of $G(k,x,N)$.
For $2\leq i \leq N$ and $0\leq j \leq xK$, we have
$\min_{0\leq k \leq x}\mathit{dist}(v'_{1,k},v_{i,j}) =\min_{0\leq k \leq x}\mathit{dist}(v'_{i,k},v_{i,j})+1 \leq K/2 + 1$.
For $0\leq i\leq x$ and $0\leq j\leq x$, $\mathit{dist}(v'_{1,i},v_{1,j})\leq x-1$,
holds and thus the diameter of $G(k,x,N)$ is at most $K+1+x$.

We consider a cycle $X$ in G$(k, x, N)$.
Let $l$ and $r$ be the minimum/maximum indices of the rows $X$ intersects, 
Similarly, let $t$ and $b$ be the minimum/maximum indices of the columns $X$ intersects. Let $m$ be the index such that $|C_m \cap X|$ maximizes, and
let $a_m = |C_m \cap X|$ for short.
Any cycle $X$ applies to one of the following four cases.
\begin{enumerate}
\item $r-l \geq 2$ holds.
\item $a_m \geq 3$ and $r-l\neq 0$ hold.
\item $r-l = 0$ holds.
\item $r-l = 1$ and $a_m = 2$ hold.
\end{enumerate}
We show that Lemma~\ref{circlehodai} holds for all the cases 
(Figure~\ref{proofkc} almost states the proof).
\begin{figure*}[t]
\begin{center}
 \includegraphics[width=120mm]{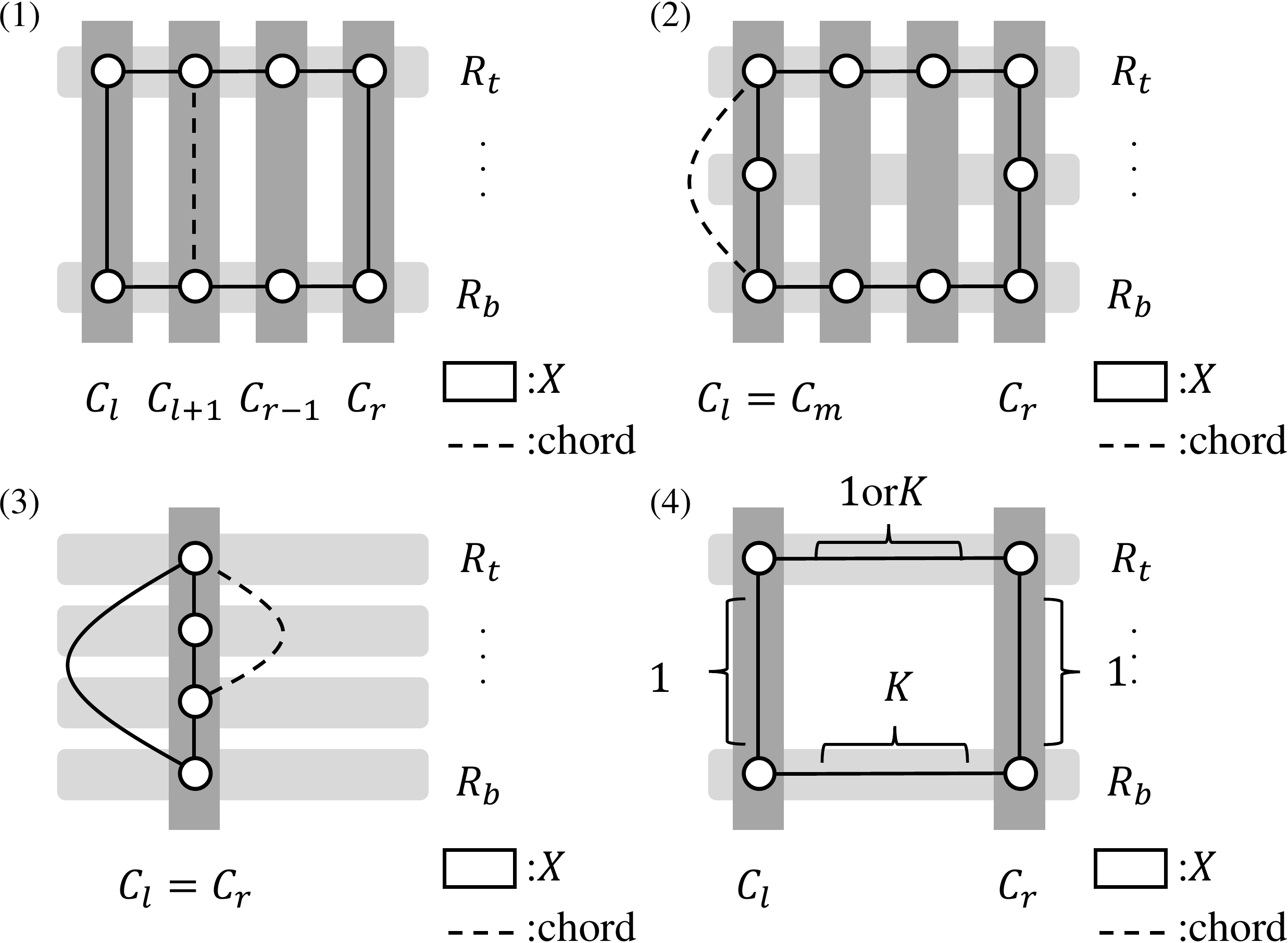}
 \end{center}
 \caption{Proof of Lemma~\ref{circlehodai}.}
 \label{proofkc}
\end{figure*}
\begin{enumerate}
\item The case of $r-l \geq 2$: By the construction of $G(k,x,N)$, $l$-$r$ path intersects $(l+1)$-column at least twice.
Let $u$ and $v$ be the intersection of $X$ and $(l+1)$-column.
Since $C_{l+1}$ is clique, $u$ and $v$ are adjacent.
Thus the edge $(u,v)$ is chord of $X$.
\item The case of $a_m \geq 3$ and $r-l\neq 0$: There exists two vertices in $C_{m}$, which are not adjacent in $X$.
Since $C_m$ is clique, there exists an edges between them, and this edge is a 
chord of $X$.
\item The case of $r-l = 0$: The cycle $X$ is a clique in graph $G$ and the lemma holds obviously.
\item The case of $r-l = 1$ and $a_m=2$: The cycle consists of four vertices $v'_{t, l}$,$v'_{t, r}$,$v'_{b, l}$, $v'_{b, r}$ and two paths, that is, the paths connecting $v'_{t, l}$ with $v'_{t, r}$, and $v'_{b, l}$ with $v'_{b, r}$.
It follows $\mathit{dist}(v'_{t, l},v'_{t, r}) \leq K = k/2 - 1$, $\mathit{dist}(v'_{b ,l},v'_{b, r}) = K = k/2 - 1$, and $\mathit{dist}(v'_{t, l},v'_{b, l}) = \mathit{dist}(v'_{t, r},v'_{b, r}) = 1$. Thus the length of $X$ is at most $k$.
\end{enumerate}
The lemma is proved.
\end{proof} 
The proof of Theorem~\ref{kchordalkakai} follows the framework by Das Sarma et al.\cite{lowerbound}. It suffices to show that the following lemma. Theorem~\ref{kchordalkakai} is obtained by combining this lemma with 
Theorem~\ref{thm:communication_complexity}.
\begin{lemma}\label{k-chordal_lowerbound}
    For any $D>2K$ and $N\geq 2kD$, $G(k,D-K,N) \in \mathcal{G}(n,N,(D-K)K+3,1)$
    holds.
\end{lemma}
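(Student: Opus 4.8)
The plan is to exhibit explicit partitions $\mathcal{X}$ (for \textbf{C1} and \textbf{C3}) and $\mathcal{Q}$ (for \textbf{C2}) of $G(k,D-K,N)$ and to verify the three defining conditions directly. Throughout write $x=D-K$ and set $s=v_{1,0}$, $r=v_{1,x}$. The backbone of the construction is a \emph{layering by column position}: for each position $0\leq p\leq xK$ I collect every long-row vertex at that position into one layer, and I slot each interior short-row vertex into the layer of the landmark it hangs from. Concretely, set $X_1=\{s\}$, $X_\ell=\{r\}$ with $\ell=xK+3=(D-K)K+3$; for $0\leq p\leq xK$ put $\{v_{i,p}\mid 2\leq i\leq N\}$ into $X_{p+2}$; and place the interior short-row vertex $v_{1,j}$ (for $1\leq j\leq x-1$) into $X_{jK+2}$, i.e.\ the same layer as the landmark $v_{i,jK}$ that $E_3$ joins it to. This gives \textbf{C1} at once: there are $1+(xK+1)+1=(D-K)K+3$ layers and the two extreme ones are singletons.

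For \textbf{C2} I would take $Q_1=\{v_{1,1},\dots,v_{1,x-1}\}$ (the interior of the short row, which is nonempty since $x=D-K>K\geq 1$ by $D>2K$) and $Q_i=\{v_{i,0},\dots,v_{i,xK}\}$ for $2\leq i\leq N$ (each long row). These are $N=b$ node-disjoint connected paths; each long row meets $s$ through the $E_3$-edge $\{v_{1,0},v_{i,0}\}$ and meets $r$ through $\{v_{1,x},v_{i,xK}\}$, while $Q_1$ meets $s$ and $r$ through the two end edges of the short row. Hence $|E(X_1,Q_i)|\geq 1$ and $|E(X_\ell,Q_i)|\geq 1$ for every $i$.

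The crux is \textbf{C3} with $c=1$, for which I would classify every edge by how many layers it spans. Edges of $E_2$ join consecutive positions, hence consecutive layers; edges of $E_4$ join vertices of equal position $h\equiv 0\pmod K$, hence lie inside one layer; and an $E_3$-edge $\{v_{1,j},v_{i,jK}\}$ lies inside the single layer $X_{jK+2}$ for interior $j$ and spans one layer for $j\in\{0,x\}$ — this is exactly where the scaling factor $K$ between the short and the long rows is used. Consequently the only edges spanning more than one layer are the short-row edges of $E_1$: the edge $\{v_{1,j},v_{1,j+1}\}$ runs from $X_{jK+2}$ to $X_{(j+1)K+2}$ and therefore strictly skips precisely the layers $X_{jK+3},\dots,X_{(j+1)K+1}$ (with the first and last edges instead anchored at $X_1$ and $X_\ell$). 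These skip-ranges are pairwise disjoint over $j$ — they are separated by the layers $X_{K+2},X_{2K+2},\dots$ that hold the short-row vertices — so at most one edge of $E_1$ jumps over any fixed layer, which is exactly what the cut $E(R_i,N(R_i)\setminus R_{i-1})$ measures. This yields $|E(R_i,N(R_i)\setminus R_{i-1})|\leq 1$ for all relevant $i$; the symmetric bound for $L_i$ follows because the position reversal $p\mapsto xK-p$ (and $j\mapsto x-j$ on the short row) is an automorphism of $G(k,x,N)$ that swaps $s$ and $r$.

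The step I expect to be most delicate is the layer-spanning analysis of \textbf{C3} at the two boundaries: one must check that the edges $\{s,v_{1,1}\}$ and $\{v_{1,x-1},r\}$, which each span $K$ layers, are counted correctly and contribute the single allowed crossing, and that keeping $r$ in its own layer $X_\ell$ (rather than in the position-$xK$ layer $X_{xK+2}$) leaves the $E_3$-edges $\{r,v_{i,xK}\}$ at single-layer spans. The hypothesis $N\geq 2kD$ is not needed for membership itself; it is invoked afterwards, together with Theorem~\ref{thm:communication_complexity}, to guarantee $\min\{b/c,\ \ell/2-1\}=\Theta((D-K)K)=\Omega(kD)$, so that the resulting bound is the claimed $\tilde\Omega(kD)$ rather than one throttled by $b=N$.
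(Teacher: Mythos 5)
Your proposal is correct and follows essentially the same route as the paper's proof: the same column-position layering $\mathcal{X}$ (with interior short-row vertices slotted into the landmark layers $X_{jK+2}$), the same row partition $\mathcal{Q}$, and the same key observation for \textbf{C3} that only the $E_1$ edges skip layers and their skip ranges are pairwise disjoint, which the paper verifies by explicitly computing the cuts $E(R_i, N(R_i)\setminus R_{i-1})$ and $E(L_i, N(L_i)\setminus L_{i-1})$. Your write-up is in fact cleaner than the paper's (whose case analysis for $\mathcal{X}$ contains index typos and omits the layer $X_{(D-K)K+1}$), and your closing remark that $N \geq 2kD$ is needed only for the subsequent application of Theorem~\ref{thm:communication_complexity}, not for class membership, is accurate.
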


\begin{proof}
We define $\mathcal{X}$ and $\mathcal{Q}$ for $G(k,D-K,N)$ as follows:
\begin{align*}
\mathcal{X} &= \{X_1,X_2,\dots,X_{(D-K)K+3}\} \ \ \text{s.t.} \\ 
X_{i}&=
\begin{cases}
        \left\{v_{1,0}\right\}& \hspace*{12mm} \text{$\left(i=1\right)$}\\
        \left\{v_{j,0}\mid 2\leq j \leq N\right\}& \hspace*{12mm} \text{$\left(i=2\right)$}\\
        \left\{v_{j,i-2}\mid 2\leq j \leq N\right\}\cup \left\{v_{\frac{i-2}{K},1} \right\}& \hspace*{12mm} \text{$\left(3\leq i \leq (D-K)K, i \bmod{K} =2\right)$}\\
        \left\{v_{j,i-2}\mid 2\leq j \leq N\right\}& \hspace*{12mm} \text{$\left(3\leq i \leq (D-K)K, i \bmod{K} \neq2\right)$}\\
        \left\{v_{j,(D-K)K-1}\mid 2\leq j \leq N\right\}& \hspace*{12mm} \text{$\left(i=(D-K)K+2\right)$}\\
        \left\{v_{1,(D-K)}\right\}& \hspace*{12mm} \text{$\left(i=(D-K)K+3\right)$}.
\end{cases}\\
\end{align*}
\begin{align*}
\mathcal{Q} &= \{Q_1,Q_2,\dots,Q_{N}\} \ \ \text{s.t.} \\ 
Q_{i}&=
\begin{cases}
        \left\{v_{1,j}\mid 1\leq j \leq (D-K)-1 \right\}& \hspace*{12mm} \text{$\left(i=1\right)$}\\
        \left\{v_{i,j}\mid 0\leq j \leq (D-K)K\right\}& \hspace*{12mm} \text{$\left(2\leq i\leq N\right)$}.
\end{cases}\\
\end{align*}
It is easy to check (\textbf{C1}) and (\textbf{C2}) is satisfied.
Thus we only show that (\textbf{C3}) is satisfied.
We have $E(R_{i},N(R_{i})\backslash R_{i-1})$ and $E(L_{i},N(L_i)\backslash L_{i})$ as follows:
\begin{align*}
E(R_{i},N(R_{i})\backslash R_{i-1})&=&
\begin{cases}
        \left\{v_{1,0}\right\}& \hspace*{12mm} \text{$\left(i=2\right)$}\\
        \left\{v_{1,\left\lfloor\frac{i-1}{K}\right\rfloor}\right\}& \hspace*{12mm} \text{$\left(3\leq i\leq \frac{(D-K)K}{2},i\bmod{K}\neq 2\right)$}\\
        \emptyset& \hspace*{12mm} \text{$\left(3\leq i\leq \frac{(D-K)K}{2},i\bmod{K}= 2\right)$}.
\end{cases}\\
E(L_{i},N(L_{i})\backslash L_{i-1})&=&
\begin{cases}
        \left\{v_{1,D-K}\right\}& \hspace*{12mm} \text{$\left(i=2\right)$}\\
        \left\{v_{1, D-K-\left\lfloor\frac{i-2}{K}\right\rfloor}\right\}& \hspace*{12mm} \text{$\left(3\leq i\leq \frac{(D-K)K}{2},i\bmod{K}\neq 2\right)$}\\
        \emptyset& \hspace*{12mm} \text{$\left(3\leq i\leq \frac{(D-K)K}{2},i\bmod{K}= 2\right)$}.
\end{cases}
\end{align*}
Thus we have $|E(R_{i},N(R_{i})\backslash R_{i-1})| \leq 1$ and $|E(L_{i},N(L_{i})\backslash L_{i-1})|\leq 1$.
Therefore we can prove that the graph $G(k,D-K,N)$ is included in $\mathcal{G}(n,N,(D-K)K+3,1)$. 
\end{proof}
%
%
%
%
%
%
%
%
%
\section{Low-Congestion Shortcut for Small diameter Graphs}
\label{sec:small_diameter}
Let $\kappa_D = n^{(D - 2)/(2D - 2)}$ for short. Note that $\kappa_3 = n^{1/4}$
and $\kappa_4 = n^{1/3}$ hold.
The main result in this section is the theorem below.
\begin{theorem}
\label{theo:dia3exist}
For any graph of diameter $D \in \{3, 4\}$, there exists an algorithm of constructing low-congestion shortcuts with quality $\tilde{O}(\kappa_{D})$ in $\tilde{O}(\kappa_{D})$ rounds.
\end{theorem}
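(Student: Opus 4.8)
The plan is to reduce each part's diameter to $\tilde{O}(\kappa_D)$ by covering it with moderately-large clusters and linking those clusters by short, low-congestion ``bridges'' routed through $G$. First I would dispose of the easy regime: any part $P_i$ with $|V_{P_i}| < \kappa_D$ is already connected and has diameter at most $|V_{P_i}|-1 < \kappa_D$, so setting $H_i=\emptyset$ meets the dilation bound while contributing congestion at most one (its own edges). Hence it suffices to treat the at most $n/\kappa_D$ parts with $|V_{P_i}| \geq \kappa_D$. For each such part I would fix a spanning tree and, by standard tree-partitioning, cut it into $O(|V_{P_i}|/\kappa_D)$ connected clusters of size $\Theta(\kappa_D)$; being connected on $O(\kappa_D)$ nodes, every cluster has internal diameter $O(\kappa_D)$ and can be traversed using only edges of $P_i$, which cost nothing beyond the parts themselves.

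The second step builds, per part, a bounded-degree balanced overlay tree on its clusters (viewed as super-nodes) and realizes each overlay edge by a short path through $G$. Here I would invoke the \emph{1-hop extension} of Section~\ref{sec:k-chordal}: letting each node add its incident edges to its own $H_i$ costs congestion at most two and lets a cluster $A$ reach all of $N^{+}(A)$ for free. Since $\Dist_G \leq D$, two clusters $A,B$ of the same part satisfy $\Dist_G(N^{+}(A),N^{+}(B)) \leq D-2$, which is $1$ for $D=3$ and $2$ for $D=4$. Thus an overlay edge is realized, for $D=3$, by a single bridge edge $(u,v) \in E(N^{+}(A),N^{+}(B))$ together with free 1-hop edges, and for $D=4$ by a length-two bridge $u\!-\!w\!-\!v$ through one intermediate node $w$. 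Taking the overlay degree to be $O(\kappa_D)$ makes its depth $O(\log n / \log \kappa_D)=O(1)$, so crossing the whole part costs $O(1)$ cluster traversals of length $O(\kappa_D)$ each plus $O(1)$ bridges, giving dilation $\tilde{O}(\kappa_D)$.

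The crux, and the step I expect to be the main obstacle, is assigning the bridges with low congestion: each of the $\leq n/\kappa_D$ overlay edges must select a bridge from its candidate set, and I must guarantee that no edge (and, for $D=4$, no middle node $w$) is used by more than $\tilde{O}(\kappa_D)$ parts. Since within-cluster and 1-hop edges are automatically safe, the entire congestion budget rests on the bridges. I would establish the existence of a good assignment by a counting / flow (Hall-type) argument exploiting that each cluster is large, so that the candidate sets $E(N^{+}(A),N^{+}(B))$ are rich and spread over enough distinct edges to admit a degree-constrained selection of per-edge load $\tilde{O}(\kappa_D)$; the cluster size $\Theta(\kappa_D)$ is precisely the value balancing the number of clusters ($\leq n/\kappa_D$) against the slack this selection needs, which is what produces the exponent $(D-2)/(2D-2)$. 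The subtle point requiring the most care is that a low-degree cluster has a small neighborhood, so ensuring uniformly rich candidate sets may force a careful choice of cluster representatives or a more global charging; the $D=4$ case is strictly harder still, since the length-two bridges introduce intermediate-node bottlenecks and the counting must additionally balance load across the middle vertices $w$, which is the source of the weaker $\kappa_4=n^{1/3}$ bound.

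Finally I would convert the existential argument into an $\tilde{O}(\kappa_D)$-round CONGEST construction. Clustering and in-cluster aggregation run in $\tilde{O}(\kappa_D)$ rounds, and because all clusters of a part lie within distance $D=O(1)$ in $G$, the candidate bridges can be discovered and a low-congestion selection computed by local sampling and estimation of edge (and middle-node) loads followed by a distributed greedy or randomized assignment, each phase of which respects the bandwidth constraint. I expect the randomized load estimation --- certifying with high probability that the greedy choices overload no edge or intermediate vertex --- to be the delicate part, mirroring the existential counting argument above.
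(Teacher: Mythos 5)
Your framework matches the paper's up to the decisive step: the paper also restricts attention to the at most $n/\kappa_D$ large parts, also uses the 1-hop extension so that each part reaches $N^{+}(V_{P_i})$ for free, and also exploits that the remaining gap between two pieces of a part is a single edge ($D=3$) or a length-two path ($D=4$). But the step you yourself flag as the crux --- selecting bridges so that no edge (or middle vertex) carries more than $\tilde{O}(\kappa_D)$ parts --- is exactly the content of the theorem, and your proposal does not prove it; it restates it as an unspecified Hall-type selection whose premise is false. The premise is that the candidate sets $E(N^{+}(A),N^{+}(B))$ are rich because the clusters are large. Cardinality-largeness gives no such thing: a connected cluster of $\Theta(\kappa_D)$ nodes produced by tree partitioning can have all of its nodes sharing a single external neighbor $w$ (a clique or star hanging off a cut vertex), in which case $E(N^{+}(A),N^{+}(B))$ can consist of $O(1)$ distinct edges no matter how large $A$ and $B$ are --- distinct pairs $(a,b)\in A\times B$ all generate the same few candidate edges, and no counting or flow argument over such sets can stop many parts from funneling through them.

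What the paper does instead, and what is genuinely missing from your plan, is twofold. First, in place of arbitrary clusters it builds \emph{terminal sets} (Lemma~\ref{lma:terminal}): inside a large part, take a shortest path of length $\Theta(\kappa_D\log^3 n)$ in $P_i + H^1_i$ and keep every third $P_i$-node on it; the shortest-path property forces these nodes to have pairwise disjoint closed neighborhoods in $G$. This disjointness is what makes candidate sets provably rich: for $D=3$, distinct pairs in $T_0\times T_j$ yield distinct middle edges, so there are at least $(\kappa_D\log^3 n)^2$ candidates (Lemma~\ref{lma:dilation_small}). Second, the selection is not a degree-constrained assignment but independent random sampling --- each edge incident to $N^{+}(V_{P_i})$ joins $H_i$ with probability $1/\sqrt{n}$ for $D=3$, or $1/h(u,i)$ for a bounded-independence hash $h$ when $D=4$ --- which resolves both sides of the tension simultaneously: with at least $\kappa_D^2\,\mathrm{polylog}(n)$ candidates and rate roughly $1/\kappa_D^2$, some bridge survives whp., while rate times $n/\kappa_D$ parts bounds congestion by $\tilde{O}(\kappa_D)$ via Chernoff (Lemma~\ref{lma:congestion_small_diameter}). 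The hash-based variable rate is precisely what tames the middle-vertex bottleneck you worry about for $D=4$, and sampling also makes the distributed implementation essentially trivial (local coin flips plus broadcasting one seed), whereas computing a global Hall-type assignment in $\tilde{O}(\kappa_D)$ CONGEST rounds would be a further unsolved problem in your plan. Until you exhibit clusters with provably many pairwise-distinct bridge candidates and a concrete low-congestion selection rule, the proposal has a genuine gap at its central step.
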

\subsection{Centralized Construction}
In the following argument, we use term ``whp. (with high probability)'' 
to mean that the event considered occurs with probability 
$1 - n^{-\omega(1)}$ (or equivalently $1 - e^{-\omega(\log n)}$). For 
simplicity of the proof, we treat any whp. event as if it necessarily occurs 
(i.e. with probability one). Since the analysis below handles only a polynomially-bounded number of whp. events, the standard union-bound 
argument guarantees that everything simultaneously occurs whp. That is, any consequence yielded by the analysis also occurs whp.
Since the proof is constructive, we first present the algorithms for $D = 3$ and $4$. They are described as a (unified) centralized algorithm, and the distributed 
implementation is explained later. Let $N'$ be the number of parts whose diameter
is more than $12\kappa_D\log^3 n$ (say \emph{large} part). Assume that 
$P_1, P_2, \dots, P_{N'}$ are large without loss of generality.
Since each part $P_i$ ($1 \leq i \leq N'$) contains at least 
$\kappa_D$ nodes, $N' \leq n / \kappa_D$ holds obviously.  
The proposed algorithm constructs the shortcut edges $H_i$ for each 
large part $P_i$ following the procedure below: 
\begin{enumerate}
    \item Each node $v \in V_{P_i}$ adds its incident edges to $H_i$ (i.e., compute
    the 1-hop extension).
    \item This step adopts two different strategies according to the value of $D$. ($D = 3$) Each node $u \in N^{+}(V_{P_i})$ adds each incident edge $(u, v)$ to $H_i$ with probability $1 / n^{1/2}$. ($D = 4$) Let 
    $\mathcal{Y} = [1, n^{1/3}/\log n]$. We first prepare an 
    $(n^{1/3}\log^3 n)$-wise independent hash function 
    $h: [0, N - 1] \times V \to \mathcal{Y}$\footnote{Let $X$  and $Y$ 
    be two finite sets. For any integer $k\ge 1$, a family of hash functions $\mathcal{H} = \{h_1, h_2, \dots, h_p\}$, where each $h_i$ is a function from $X$ to $Y$, is called \emph{$k$-wise independent} if for any distinct
    $x_1, x_2, \dots, x_k \in X$ and any $y_1, y_2, \dots y_k 
    \in Y$, a function $h$ sampled from
    $\mathcal{H}$ uniformly at random satisfies 
    $\Pr[\bigwedge_{1 \leq i \leq k} h(x_i) = y_i] = 1 / |Y|^k$. }. 
    Each node $u \in V$ adds each incident edge $(u, v)$ to $H_i$ 
    with probability $1 / h(u, i)$ if $v \in N^+(V_{P_i})$.
\end{enumerate}

We show that this algorithm provides a low-congestion shortcut of 
quality $\tilde{O}(\kappa_D)$. First, we look at the bound for congestion.
Let $H^1_i$ be the set of the edges added 
to $H_i$ in the first step, and $H^2_i$ be those in the second step. 
Since the congestion of 1-hop extension is negligibly small, 
it suffices to consider the congestion incurred by step 2. 
Intuitively, we can believe the congestion of $\tilde{O}(\kappa_D)$ 
from the fact that the expected congestion of each edge is  
$\tilde{O}(\kappa_D)$: Since the total number of 
large parts is at most $n / \kappa_D$, the expected congestion of 
each edge incurred in step 2 is $n / \kappa_D \cdot (1 / n^{1/2}) = 
O(n^{1/4})$ for $D = 3$, and $(n/\kappa_D)\sum_{y \in \mathcal{Y}} (1/y) \cdot 
(1/|\mathcal{Y}|) \leq (n/\kappa_D) \cdot (\log n / |\mathcal{Y}|) = \tilde{O}(n^{1/3})$ for $D = 4$.
\begin{lemma}
\label{lma:congestion_small_diameter}
The congestion of the constructed shortcut is $\tilde{O}(\kappa_D)$ whp.
\end{lemma}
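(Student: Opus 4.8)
The plan is to fix an arbitrary edge $e=(u,v)$, bound the number of (large) parts whose augmented subgraph $P_i+H_i$ contains $e$, and then take a union bound over all edges. Since the parts are node-disjoint, $e$ lies directly in at most one part, and by the $1$-hop extension of Step~1 it is added to at most two further parts (those containing $u$ or $v$); hence the Step-1 contribution to the congestion of $e$ is $O(1)$. It therefore suffices to control the Step-2 contribution, which I split as $C_e^{(u)}+C_e^{(v)}$, where $C_e^{(u)}$ counts the large parts $P_i$ for which $u$ adds $(u,v)$ to $H_i$ in Step~2, and symmetrically for $v$. Throughout I use the bound $N'\le n/\kappa_D$ established above, which holds because every large part contains at least $\kappa_D$ nodes.

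For $D=3$ the relevant trials are fully independent, so the argument is routine. Here $C_e^{(u)}$ is a sum of at most $N'\le n^{3/4}$ mutually independent Bernoulli trials of success probability $n^{-1/2}$, so its mean is at most $n^{1/4}=\kappa_3$. I would apply the multiplicative Chernoff bound to get $C_e^{(u)}=O(\kappa_3)$ with failure probability $e^{-\Omega(\kappa_3)}$; since $\kappa_3=n^{1/4}=\omega(\log n)$, this is $n^{-\omega(1)}$ and survives a union bound over all $\binom{n}{2}$ edges.

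The genuinely delicate case is $D=4$, where the trials are no longer independent: for fixed $u$, the probabilities $1/h(u,i)$ are governed by the hash values $\{h(u,i)\}_i$, which are only $(n^{1/3}\log^3 n)$-wise independent. Writing $k=n^{1/3}\log^3 n$, I first observe that $C_e^{(u)}=\sum_{i\,:\,v\in N^{+}(V_{P_i})}B_i$, where each indicator $B_i$ depends only on $h(u,i)$, so the family $\{B_i\}_i$ inherits the $k$-wise independence of $\{h(u,i)\}_i$. Using $\mathbb{E}[1/h(u,i)]\le (\log n)/|\mathcal{Y}|=\log^2 n/n^{1/3}$ together with $N'\le n^{2/3}$, the mean satisfies $\mu:=\mathbb{E}[C_e^{(u)}]=O(n^{1/3}\log^2 n)=\tilde{O}(\kappa_4)$. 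The heart of the argument is a tail bound that tolerates limited independence: since the $k$-th central moment of a sum of $k$-wise independent indicators coincides with that of the fully independent case, Markov's inequality applied to this moment yields $\Pr[\,|C_e^{(u)}-\mu|\ge T\,]\le \big(C(k\mu+k^2)/T^2\big)^{k/2}$ for an absolute constant $C$. The point is that $k$ has been calibrated exactly so that taking $T=\Theta(k)=\tilde{O}(\kappa_4)$ (note $k\ge\mu$, so $k\mu+k^2=\Theta(k^2)$) drives the right-hand side to $2^{-\Omega(k)}=2^{-\Omega(n^{1/3}\log^3 n)}=n^{-\omega(1)}$, while keeping the deviation, and hence the bound $C_e^{(u)}\le \mu+T=\tilde{O}(\kappa_4)$, within the target quality.

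The symmetric estimate for $C_e^{(v)}$ and a union bound over the at most $\binom{n}{2}$ edges (each failure event being $n^{-\omega(1)}$) then give that every edge has congestion $\tilde{O}(\kappa_D)$ simultaneously whp. I expect the main obstacle to be exactly the $D=4$ concentration step: one must first verify that the $B_i$ genuinely are $k$-wise independent under the construction, and then balance the triple $(k,\mu,T)$ so that the limited-independence moment bound is at once strong enough to beat the union bound over $\Theta(n^2)$ edges and loose enough to keep the controllable deviation at $\tilde{O}(\kappa_4)$.
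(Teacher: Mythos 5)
Your proof is correct, and for $D=3$ it coincides with the paper's argument (a sum of at most $n/\kappa_3$ independent Bernoulli trials, Chernoff, union bound). For $D=4$, however, you take a genuinely different route. The paper never invokes a limited-independence tail inequality: it partitions the set $\mathcal{P}'$ of relevant large parts into groups of size between $(n^{1/3}\log^3 n)/2$ and $n^{1/3}\log^3 n$, observes that within a single group the indicators are \emph{fully} independent (the group size does not exceed the independence parameter of $h$), applies the standard Chernoff bound per group to get $X^{\ell} \leq 2\mathbb{E}[X^{\ell}]$ with failure probability $e^{-\Omega(\log^4 n)}$ --- the group size is calibrated exactly so that each group's mean is at least $(\log^4 n)/2$ --- and then sums the per-group bounds deterministically, obtaining $\sum_{\ell} 2\mathbb{E}[X^{\ell}] = 2|\mathcal{P}'|\log n/|\mathcal{Y}| = \tilde{O}(n^{1/3})$. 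You instead hit the whole sum at once with a Bellare--Rompel-type moment bound under $k$-wise independence, $k = n^{1/3}\log^3 n$. Both are sound, and the two checks you flag do go through: each indicator $B_i$ is a function of $h(u,i)$ and fresh coins, so the family inherits $k$-wise independence, and $k \geq \mu$ holds, so $T = \Theta(k) = \tilde{O}(\kappa_4)$ both kills the right-hand side ($2^{-\Omega(k)}$) and stays within the target quality. What each approach buys: yours is more modular (one black-box concentration theorem, no grouping bookkeeping) and yields a far stronger failure probability ($2^{-\Omega(n^{1/3}\log^3 n)}$ versus $e^{-\Omega(\log^4 n)}$ per group, though both beat the union bound over $\Theta(n^2)$ edges); the paper's needs only the vanilla Chernoff bound, at the price of the partition construction and an implicit edge case when $|\mathcal{P}'|$ is below the minimum group size --- harmless, since then the congestion is trivially at most $|\mathcal{P}'| = \tilde{O}(\kappa_4)$, and your argument handles that case uniformly.
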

\begin{proof}
It suffices to show that the congestion of any edge $e = (u, v)\in E$ is 
$\tilde{O}(\kappa_D)$ whp. For simplicity of the proof, we see an 
undirected edge $e = (u, v)$ as two (directed) edges $(u, v)$ and $(v, u)$,
and distinguish the events of adding $(u, v)$ to shortcuts by $u$ and 
that by $v$. That is, the former is recognized as 
adding $(u, v)$, and the latter as adding $(v, u)$. Obviously, 
the asymptotic bound holding for directed edge $(u, v)$ also holds for the corresponding undirected edge $(u, v)$ actually existing in $G$ (which is at 
most twice of the directed bound). Since the first step of the algorithm 
increases the congestion of each directed edge at most by one, it suffices 
to show that 
the congestion incurred by the second step is at most $\tilde{O}(\kappa_D)$.

Let $X_i$ be the indicator random variable for the event $(u, v) \in H^2_i$,
and $X = \sum_{i} X_i$. The goal of the proof is to show that $X = \tilde{O}
(\kappa_D)$ holds whp. The cases of $D = 3$ and $D = 4$ are proved separately. 
($D = 3$) Since at most $n/ \kappa_3$ large parts exist, 
we have $\mathbb{E}[X] \leq (n/\kappa_3) \cdot (1/n^{1/2}) = n^{1/4} = \kappa_3.$
The straightforward application of Chernoff bound to $X$ allows us to bound the congestion of $e$ by at most $2\kappa_3$ 
with probability $1 - e^{-\Omega(n^{1/4})}$.  
($D = 4$) Let $\mathcal{P}'$ be the subset of all large parts $P_j$ such that
$u \in N^+(P_j)$ holds. Consider an arbitrary partition of $\mathcal{P'}$ into
several groups with size at least $(n^{1/3}\log^3 n)/2$ and at most 
$n^{1/3}\log^3 n$. 
Let $q$ be the number of groups. Each group is identified by a number 
$\ell \in [1, q]$. We refer to the $\ell$-th group as $\mathcal{P}^{\ell}$. 
Fixing $\ell$, we bound the number of parts in $\mathcal{P}^{\ell}$ using 
$e = (u, v)$ as a shortcut edge. Let $Y_{i}$ be the value of $h(u, i)$. 
For $P_i \in \mathcal{P}^{\ell}$, the probability that $X_{i} = 1$ is $\Pr[X_{i} = 1] = \sum_{y \in \mathcal{Y}} \Pr[Y_{i} = y] 1/y = \mathit{Har}\left(|\mathcal{Y}|\right)/|\mathcal{Y}|,$
where $\mathit{Har}(x)$ is the harmonic number of $x$, i.e., 
$\sum_{1 \leq i \leq x} i^{-1}$.
Letting $X^{\ell} = \sum_{j \in P^{\ell}} X_{j}$, we have $\mathbb{E}[X^{\ell}] =
(|P^{\ell}|\mathit{Har}(|\mathcal{Y}|))/|\mathcal{Y}|$. Since 
$\mathit{Har}(x) \leq \log x$, we have $(|P^{\ell}|\log n)/|\mathcal{Y}| 
\geq \mathbb{E}[X^{\ell}] \geq |P^{\ell}|/|\mathcal{Y}| = 
(\log^4 n)/2$.
Since the hash function $h$ is $(n^{1/3}\log^3 n)$-wise independent, it 
is easy to check that 
$X_{ 1}, X_{ 2}, \dots, X_{ p^{\ell}}$ are independent.
We apply Chernoff bound to $X^{\ell}$, and obtain $\Pr[X^{\ell} \leq 2\mathbb{E}[X^{\ell}]] \geq 1 - e^{-\Omega(\mathbb{E}[X^{\ell}])} =  1 - e^{-\Omega(\log^4 n)}.$
It implies 
that for any $\ell$ at most $2\mathbb{E}[X^{\ell}]$ groups use $(u, v)$ as 
their shortcut edges. The total congestion of $(u, v)$ is obtained by 
summing up $2\mathbb{E}[X^{\ell}]$ for all $\ell \in [1, q]$, which results
in $\sum_{\ell} 2|P^{\ell}| \log n / |\mathcal{Y}| = 2|\mathcal{P}'|\log n / |\mathcal{Y}| = \tilde{O}(n^{1/3}).$
The lemma is 
proved.
\end{proof}
For bounding dilation, we first introduce several preliminary notions and terminologies. Given a graph $G = (V, E)$, a subset $S \subset V$ is called 
an \emph{$(\alpha, \beta)$-ruling set} if it satisfies that (1) for any $u, v \in S$, 
$\Dist_G(u, v) \geq \alpha$ holds, and (2) for any node $v \in V$, there exists
$u \in S$ such that $\Dist_G(v, u) \leq \beta$ holds. It is known that
there exists an $(\alpha, \alpha +1)$-ruling set for any graph $G$~\cite{AGLP89}. 
 Let $\hat{P}_i = P_i + H^1_i$ for short. 
For the analysis of $P_i$'s dilation, we first consider an $(\alpha, \alpha + 1)$-ruling 
set of $\hat{P}_i$ for $\alpha = 12\kappa_D \log^3 n$, which is denoted 
by $S = \{s_0, s_1, \dots, s_z\}$. Note that this ruling set is introduced only for the analysis, and the algorithm does not construct it actually. The key 
observation of the proof is that for any $s_j$ ($1 \leq j \leq z$) $H_i$ 
contains a path 
of length $\tilde{O}(\kappa_D)$ from $s_0$ to $s_j$ whp. 
It follows that any two nodes $u, v \in V_{\hat{P}_i}$ are connected by a path of 
length $\tilde{O}(\kappa_D)$ in $P_i + H_i$ because any node in 
$V_{\hat{P}_i}$ has at least one ruling-set node within distance $\alpha + 1$ 
in $P_i + H^1_i$.

To prove the claim above, we further introduce the notion 
of \emph{terminal sets}.
A terminal set $T_j \subseteq V_{P_i}$ associated with $s_j \in S$ ($0 \leq j \leq z$) is the subset of $V_{P_i}$ satisfying (1) $|T_j| \geq 
\kappa_D \log^3 n$, (2) $\Dist_{P_i + H_i}(s_j, x) \leq 6 \kappa_D \log^3 n$ for any
$x \in T_j$, and (3) $N^+(x) \cap N^+(y) = \emptyset$ for any $x, y \in T_j$ (notice that $N^+(\cdot)$ is the set of neighbors in $G$, not in $P_i + H^1_i$).
We can show that such a set always exists. 

\begin{lemma}\label{lma:terminal}
Letting $S = \{s_0, s_1, \dots, s_z\}$ be any $(\alpha, \alpha + 1)$-ruling 
set of $\hat{P}_i$ for $\alpha = 14\kappa_D \log^3 n$, there always exists a terminal
set $\mathcal{T} = \{T_0, T_1, \dots, T_z\}$ associated with $S$. 
\end{lemma}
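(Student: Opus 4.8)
The plan is to build each $T_j$ from a single long shortest path emanating from $s_j$ in $\hat P_i$, exploiting the $1$-hop extension to convert ``consecutive-on-a-path'' into ``far-apart-in-$G$''. The guiding observation is a separation property of shortest paths in $\hat P_i$: if two part-nodes $w_a,w_b\in V_{P_i}$ lie on a shortest path of $\hat P_i$ and share a common closed $G$-neighbor $u\in N^+(w_a)\cap N^+(w_b)$, then the $1$-hop extension already contains both $\{w_a,u\}$ and $\{w_b,u\}$, so $\Dist_{\hat P_i}(w_a,w_b)\le 2$. Hence any two part-nodes whose indices along the path differ by at least $3$ have \emph{disjoint} closed neighborhoods $N^+(\cdot)$, which is exactly condition~(3). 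This device sidesteps the fact that degrees (and therefore balls) may be huge, which would doom a naive volume/greedy packing.

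First I would secure a long path. The interesting case is $z\ge 1$: then for any $s_j$ the ruling-set property supplies a second ruling node $s_{j'}$ with $\Dist_{\hat P_i}(s_j,s_{j'})\ge\alpha=14\kappa_D\log^3 n$, and truncating a shortest $s_j$--$s_{j'}$ path at length $L:=6\kappa_D\log^3 n$ (a prefix of a shortest path is shortest) yields a shortest $\hat P_i$-path $\pi=(s_j=w_0,w_1,\dots,w_L)$. An auxiliary fact I would verify is that $\pi$ contains no two consecutive \emph{external} nodes (outside $V_{P_i}$): every edge of $H^1_i$ is incident to $V_{P_i}$, so in $\hat P_i$ an external node has only part-nodes as neighbors; in particular, if $s_j$ itself is external, $w_1\in V_{P_i}$.

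Next I would extract $T_j$ from $\pi$. Walking along $\pi$ and greedily selecting part-nodes so that consecutive selected indices differ by $3$ or $4$ (always possible, since part-nodes occur in at least one of any two consecutive indices), I obtain at least $L/4\ge\kappa_D\log^3 n$ part-nodes that are pairwise index-separated by $\ge 3$, hence with pairwise disjoint closed neighborhoods (conditions~(1) and~(3)). Each lies at $\hat P_i$-distance $\le L$ from $s_j$, and since $\hat P_i\subseteq P_i+H_i$ this gives $\Dist_{P_i+H_i}(s_j,\cdot)\le 6\kappa_D\log^3 n$ (condition~(2)). Finally, because $T_j$ sits inside the ball of radius $6\kappa_D\log^3 n$ about $s_j$ in $\hat P_i$ and distinct ruling nodes are $\ge 14\kappa_D\log^3 n$ apart ($12<14$), these balls — and hence the $T_j$'s — are pairwise disjoint, which is what the subsequent sampling argument will use.

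The main obstacle is precisely the unbounded-degree issue: one cannot lower-bound the $2$-packing number of a ball by its cardinality divided by a maximum $2$-neighborhood size, since a single high-degree node can swallow the entire ball. The path-based construction above is the device that avoids this, reducing the packing guarantee to the elementary ``index-gap $\ge 3$ implies disjoint $N^+$'' lemma. The remaining care is (i) the external-node structure of $\hat P_i$, ensuring the selected nodes actually lie in $V_{P_i}$, and (ii) the degenerate ruling set $z=0$, where $\hat P_i$ already has diameter $O(\kappa_D\log^3 n)$ and the dilation bound follows directly without terminal sets.
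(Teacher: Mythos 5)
Your proof takes essentially the same route as the paper's: extract a long shortest path from $s_j$ in $\hat P_i$, observe that nodes outside $V_{P_i}$ cannot be consecutive on it (so at least half its nodes are part-nodes), select part-nodes with pairwise index gap at least three, and derive condition (3) from the fact that a shared $G$-neighbor of two selected nodes would give a length-two bypass through $H^1_i$, contradicting that the path is shortest. If anything, your write-up is slightly more careful than the paper's, which takes the existence of the length-$6\kappa_D\log^3 n$ shortest path for granted and does not discuss the degenerate single-node ruling set ($z=0$), both of which you address explicitly.
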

\begin{proof}
The proof is constructive. Let $c = 6 \kappa_D \log^3 n$ for short. 
We take an arbitrary shortest path $Q = (s_j = u_0, u_1, u_2, \dots, 
u_{c})$ of length $c$ in $P_i + H^1_i$ starting from $s_j \in S$.  
Since no two nodes in $N^+(V_{P_i}) \setminus V_{P_i}$ are adjacent
in $P_i + H^1_i$, $Q$ contains no two consecutive nodes which are both
in $N^+(V_{P_i}) \setminus V_{P_i}$. It implies that at least
half of the nodes in $Q$ belongs to $V_{P_i}$. Let $q' = (u'_{0}, u'_{1}, 
\dots u'_{c'})$ be the subsequence of $Q$ consisting of the nodes in $V_{P_i}$.
Then we define $T_j = \{u'_{0}, u'_{3}, \dots, 
u'_{3\lfloor c'/3 \rfloor}\}$, which satisfies the three properties of 
terminal sets: 
It is easy to check that the first and second properties hold. In addition, 
one can show that $\Dist_G(u'_x, u'_{x+a}) \geq 3$ (which is equivalent to 
$N^+(u'_x) \cap N^+(u'_{x + a}) = \emptyset$) holds for any $a \geq 3$ and 
$x \in [1, c' - a]$: Suppose for contradiction that $\Dist_G(u'_x, u'_{x+a}) \leq 2$ holds for some $a \geq 3$ and $x \in [1, c' - a]$. The distance two 
between $u'_x$ and $u'_{x+a}$ implies $N^+(u'_x) \cap N^+(u'_{x+a}) \neq \emptyset$, and thus $\Dist_{\hat{P}_i}(u'_x, u'_{x+a}) \leq 2$ holds. Then 
bypassing the subpath from $u'_{x}$ to $u'_{x+a}$ in $Q$ through the distance-two
path we obtain a path from $s_j$ to $u_c$ shorter than $Q$. It contradicts the fact that $Q$ is the shortest path. 
\end{proof}

The second property of terminal sets and the following lemma deduces 
the fact that $\Dist_{P_i + H_i}(s_0, s_j) = \tilde{O}(\kappa_D)$ holds 
for any $j \in [0, z]$.

\begin{lemma}
\label{lma:dilation_small}
Letting $S = \{s_0, s_1, \dots, s_z\}$ be any $(\alpha, \alpha + 1)$-ruling 
set of $\hat{P}_i$ for $\alpha = 14\kappa_D \log^3 n$, and 
$\mathcal{T} = \{T_0, T_1, \dots, T_z\}$ be a terminal set associated with
$S$. For any $j \in [0, z]$, there exist $u \in T_0$ and $v \in T_j$ such that 
$\Dist_{P_i + H_i}(u, v) = O(1)$ holds.
\end{lemma}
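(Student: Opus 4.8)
The plan is to exploit three features of the terminal sets simultaneously: their size ($|T_j|\ge\kappa_D\log^3 n$), the pairwise disjointness of the neighborhoods of their members (property~3 of terminal sets), and the diameter bound $D$. First, the case $j=0$ is trivial (take $u=v$), and if $T_0\cap T_j\neq\emptyset$ we are again done with distance $0$; so assume $T_0\cap T_j=\emptyset$. By passing to arbitrary subsets I may also assume $|T_0|=|T_j|=\kappa_D\log^3 n$ exactly, since proving the claim for subsets implies it for the originals. Finally, if some pair $u\in T_0,\ v\in T_j$ has $\Dist_G(u,v)\le 2$, then a common neighbor $w$ (or the edge $uv$ itself) gives a path $u,w,v$ whose two edges both lie in $H^1_i$ by the 1-hop extension (as $u,v\in V_{P_i}$), so $\Dist_{P_i+H_i}(u,v)\le 2$ and we are done. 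Hence the remaining work is to connect $T_0$ and $T_j$ when every cross pair is at distance at least $3$; note this is equivalent to $A_0:=\bigcup_{u\in T_0}N(u)$ and $A_j:=\bigcup_{v\in T_j}N(v)$ being disjoint, and that (by property~3) the subfamilies $\{N(u)\}_{u\in T_0}$ and $\{N(v)\}_{v\in T_j}$ are each internally disjoint.

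For $D=3$, every cross pair $(u,v)$ is joined by a path $u,a,b,v$ with $a\in N(u)$, $b\in N(v)$, $(a,b)\in E$; I call $(a,b)$ its \emph{middle edge}. Because the neighborhoods on each side are disjoint, an oriented middle edge $(a,b)\in A_0\times A_j$ determines the pair $(u,v)$ uniquely, so the number of distinct middle edges is at least $|T_0|\,|T_j|/2=\Omega(\kappa_3^2\log^6 n)=\Omega(n^{1/2}\log^6 n)$. Each middle edge is independently added to $H^2_i$ (by its $A_0$-endpoint, which lies in $N^+(V_{P_i})$) with probability $1/n^{1/2}$, so the probability that none survives is at most $(1-n^{-1/2})^{\Omega(n^{1/2}\log^6 n)}=n^{-\omega(1)}$. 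A surviving middle edge $(a,b)$ yields the path $u,a,b,v$ with $(u,a),(b,v)\in H^1_i$ and $(a,b)\in H^2_i$, i.e.\ $\Dist_{P_i+H_i}(u,v)\le 3$.

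For $D=4$ the connection must pass through a distance-two \emph{bridge}: a midpoint $b$ with a neighbor $a\in N(u)\subseteq A_0$ and a neighbor $c\in N(v)\subseteq A_j$ gives the path $u,a,b,c,v$ whose end edges are in $H^1_i$ and whose two middle edges $(b,a),(b,c)$ must both be sampled by $b$. Writing $d_b^{(0)}=|N(b)\cap A_0|$ and $d_b^{(j)}=|N(b)\cap A_j|$, disjointness shows each bridge $b$ can witness at most $d_b^{(0)}d_b^{(j)}$ cross pairs, so $\sum_b d_b^{(0)}d_b^{(j)}\ge|T_0|\,|T_j|=\Omega(n^{2/3}\log^6 n)$. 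Since $b$ samples each eligible incident edge at rate $1/h(b,i)$ with $h(b,i)$ uniform on $[1,Y_{\max}]$, $Y_{\max}=n^{1/3}/\log n$, the probability $p_b$ that $b$ is \emph{realized} (samples at least one edge into $A_0$ and one into the disjoint set $A_j$) satisfies $p_b\gtrsim\min(1,d_b^{(0)}/Y_{\max})\min(1,d_b^{(j)}/Y_{\max})$. I then split the bridges by degree: a bridge with $\min(d_b^{(0)},d_b^{(j)})\gtrsim Y_{\max}\log^2 n$ is \emph{reliable}, i.e.\ $p_b\ge 1-n^{-\omega(1)}$, so a single reliable bridge finishes the proof; otherwise one checks in each remaining degree regime that $p_b\gtrsim \mathrm{cov}_b/(n^{2/3}\,\mathrm{polylog}\,n)$, where $\mathrm{cov}_b=\min(d_b^{(0)},|T_0|)\min(d_b^{(j)},|T_j|)$, whence $\sum_b p_b=\omega(\log n)$. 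Because the realization indicators are functions of the $(n^{1/3}\log^3 n)$-wise independent hash values $h(\cdot,i)$ and of independent per-edge coins, they are $(n^{1/3}\log^3 n)$-wise independent, and a $k$-wise-independent concentration inequality (with $k=n^{1/3}\log^3 n$) turns $\sum_b p_b=\omega(\log n)$ into ``some bridge is realized whp'', giving a length-four path and $\Dist_{P_i+H_i}(u,v)=O(1)$. Cross pairs at distance exactly three need only a single sampled middle edge and are strictly easier, so they are absorbed into the same argument.

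The main obstacle is the $D=4$ analysis, and specifically the degree balancing forced by the calibrated sampling. One must argue that no adversarial degree profile of the bridges can simultaneously keep the congestion at $\tilde O(n^{1/3})$ (already controlled by Lemma~\ref{lma:congestion_small_diameter}) and make every bridge unlikely to be realized; the genuine danger is coverage concentrated on a few very high-degree bridges, which the split into ``reliable'' versus ``light'' bridges is designed to neutralize. The second delicate point is obtaining a high-probability (not merely constant-probability) guarantee while the hash supplies only $(n^{1/3}\log^3 n)$-wise independence rather than full independence; this is precisely why the terminal sets were sized at $\kappa_D\log^3 n$, furnishing the polylogarithmic slack in $\sum_b p_b$ that the $k$-wise Chernoff bound needs to reach $n^{-\omega(1)}$.
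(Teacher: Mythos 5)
Your proposal is correct, and it splits naturally into two halves relative to the paper. For $D=3$ it is essentially the paper's own proof: the paper also counts the distinct middle (``second'') edges, getting $|F|=|T_0||T_j|\ge(\kappa_3\log^3 n)^2$ by the neighborhood-disjointness properties, and bounds the failure probability by $(1-n^{-1/2})^{|F|}=e^{-\Omega(\log^6 n)}$, exactly as you do. For $D=4$, however, you take a genuinely different route. The paper fixes one length-two connector $P_2(u,v)=(a_{uv},b_{uv},c_{uv})$ per cross pair, shows each edge lies on at most $|T_0|+|T_j|$ connectors, extracts an edge-disjoint subfamily $\mathcal{P}'_2$ of size $\Omega(\kappa_D\log^3 n)$, and then argues that whp some center $b$ draws a lucky hash value $h(b,i)\le\Delta(b)/\log^2 n$ (using $\sum_b\Delta(b)=|\mathcal{P}'_2|$ and the independence of the hash values), after which some connector through that center is realized whp. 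You dispense with the edge-disjoint extraction entirely: you cover all pairs by bridges, cap each bridge's coverage by $\mathrm{cov}_b=\min(d_b^{(0)},|T_0|)\min(d_b^{(j)},|T_j|)$, and close with a reliable/light dichotomy plus a limited-independence Chernoff bound. I checked the two steps you left implicit and they go through: with your threshold $Y_{\max}\log^2 n$ every non-reliable bridge has $\mathrm{cov}_b\le|T_0|\cdot Y_{\max}\log^2 n=n^{2/3}\log^4 n$ while $p_b=\Omega(1)$ in the worst regime (and proportionally better in the lighter regimes), so $\sum_b p_b\gtrsim(n^{2/3}\log^6 n)/(n^{2/3}\log^4 n)=\log^2 n$; and a Schmidt--Siegel--Srinivasan-type bound with $k=n^{1/3}\log^3 n\gg\log^2 n$ then yields failure probability $e^{-\Omega(\log^2 n)}=n^{-\omega(1)}$. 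Comparing the two: the paper's extraction collapses the analysis to the one-dimensional multiplicities $\Delta(b)$ and the hash distribution, giving a shorter write-up, but its key inequality $\Pr[X_b=1\mid Y_b=y]\ge 1-(1-1/y)^{\Delta(b)}$ quietly treats a connector as needing a single success of probability $1/y$ even though both of its edges at $b$ must be sampled; your bridge accounting keeps the two required edges explicit via $p_b\gtrsim\min(1,d_b^{(0)}/Y_{\max})\min(1,d_b^{(j)}/Y_{\max})$, is insensitive to which connector each pair chooses, and makes visible exactly where the $n^{-\omega(1)}$ guarantee comes from (the $\log^2 n$ slack built into $|T_j|\ge\kappa_D\log^3 n$), at the price of a two-parameter degree case analysis and an explicit $k$-wise-independent concentration inequality that the paper never has to invoke.
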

\begin{proof}
Since the distance of $s_0$ and $s_j$ is at least $14\kappa_D\log^3 n$, we have $N^+(T_0) \cap N^+(T_j) = \emptyset$.
The proof is divided into the cases of $D = 3$ and $D = 4$.
($D = 3$) By the conditions of $N^+(T_0) \cap N^+(T_j) = \emptyset$ and $D = 3$,
there exists a path of length exactly three from any node $a \in T_0$ to any node $b \in T_j$. Letting $e_{a, b}$ be the second edge in that path, we define 
$F = \{e_{a, b} \mid a \in T_0, b \in T_j\}$. By the third property of terminal sets and the fact of $N^+(T_0) \cap N^+(T_j) = \emptyset$,
for any two edges $(x_1, y_1), (x_2, y_2) \in F$, either $x_1 \neq x_2$ or 
$y_1 \neq y_2$ holds. That is, $e_{a_1, b_1} \neq e_{a_2, b_2}$ holds for
any $a_1, a_2 \in T_0$ and $b_1, b_2 \in T_j$. By the second property of terminal sets, it implies $|F| = |T_0||T_j| 
\geq (\kappa_D \log^3 n)^2$. Since each edge in $F$ is added to $H^2_i$ 
with probability
$1/n^{1/2} = 1/ \kappa_D^2$, the probability that no edge in $F$ is added to $H^2_i$ is at most $(1 - 1 / \kappa_D^2)^{(\kappa_D \log^3 n)^2} \leq e^{-\Omega(\log^6 n)}$. That is, an edge $e_{a, b}$ is added to $H_i$ whp. and then
$\Dist_{P_i + H_i}(a, b) \leq 3$ holds. 
($D = 4$) For any node $u \in T_0$ and $v \in T_j$, there exists a path
from $u$ to $v$ of length three or four in $G$. That path necessarily contains
a length-two sub-path $P_2(u, v) = (a_{uv}, b_{uv}, c_{uv})$ such that 
$a_{uv} \in N^+(u)$ and $c_{uv} \in N^+(v)$ holds (if $P_2(u, v)$ is not uniquely
determined, an arbitrary one is chosen). We call $(a_{uv}, b_{uv})$ and 
$(b_{uv}, c_{uv})$ the \emph{first} and \emph{second edges} of $P_2(u, v)$
respectively. Let $\mathcal{P}_2 = \{P_2(u, v) \mid 
u \in T_0, v \in T_j\}$, $G'$ be the union of $P_2(u, v)$ for all 
$u \in T_0$ and $v \in T_j$, and $\mathcal{P}^e_2 = \{P_2(u, v) \in \mathcal{P}_2 
\mid e \in P_2(u, v)\}$ for any $e \in E_{G'}$. 
We first bound the size of $\mathcal{P}^e_2$. Assume that 
$e$ is a first edge of some path in $\mathcal{P}^e_2$. Let $e = (a, b)$ and 
$u \in T_0$ be the (unique) node such that $a \in N^+(u)$ holds. 
Since at most $|T_j|$ paths in $\mathcal{P}_2$ can start from a node in 
$N^+(u)$, the number of paths in $\mathcal{P}_2$ using $e$ as their first edges
is at most $|T_j|$. Similarly, if $e$ is the second edge of some path in $\mathcal{P}^e_2$, at most $|T_0|$ paths in $\mathcal{P}_2$ can contain $e$ as their 
second edges. While some edge may be used as both first and second edges, the total number of paths using $e$ is bounded by $|T_0| + |T_j| = 2 \kappa_D \log^3 n$. It implies that any path $P_2(u, v)$ can share edges with at most 
$4 \kappa_D \log^3 n$ edges, and thus $\mathcal{P}_2$ contains at least 
$|T_0||T_j| / (4\kappa_D \log^3 n + 1) \geq \kappa_D \log^3 n/ 5$ 
edge-disjoint paths. Let $\mathcal{P}'_2 \subseteq \mathcal{P}_2$ be the maximum-cardinality subset of $\mathcal{P}_2$ such that any $P_2(u_1, v_1), 
P_2(u_2, v_2) \in \mathcal{P}'_2$ is edge-disjoint. We define 
$B = \{ b \mid (a, b, c) \in \mathcal{P}'_2\}$. Let $\Delta(b)$ be the number of
paths in $\mathcal{P}'_2$ containing $b \in B$ as the center. Due to the 
edge disjointness of $\mathcal{P}'_2$, we have $|E_{G}(N^+(T_0), b)| \geq 
\Delta(b)$ and $|E_{G}(N^+(T_j), b)| \geq \Delta(b)$ for any $b \in B$. 
Let $Y_b$ be the value of $h(b, i)$, and $X_b$ 
be the indicator random variable that takes one if a path in $\mathcal{P}'_2$ which contains $b$ as the center is added to $H_i$, and zero otherwise. 
Let $X$ and $Y$ be the indicator random variables corresponding to the 
events of $\bigvee_{b \in B} X_b = 1$ and $\bigvee_{b \in B} Y_b \leq 
\Delta(b)/\log^2 n$ respectively.
Then we obtain $\Pr[X_b = 1 \mid Y_b = y] \geq 1 - \left(1 - 1/y\right)^{\Delta(b)} \geq 1- 2e^{-\Delta(b)/y},$
 and thus $\Pr[X_b = 1 \mid Y_b \leq \Delta(b)/\log^2 n] \geq  
1 - e^{-\Omega(\log^2 n)}$ holds. That is, $\Pr[X = 1 \mid Y = 1] 
\geq 1 - e^{-\Omega(\log^2 n)}$ holds. Since $h$ is $(n^{1/3}\log^3 n)$-wise
independent, $Y_b$ for all $b \in B$ are independent. Thus we obtain 
\begin{align*}
\Pr[Y = 1] &= 1 - \Pr[Y = 0] \\
&= 1 - \Pr\left[\bigwedge_{b \in B} Y_b> \frac{\Delta(b)}{\log^2 n}\right] \\
&= 1 - \prod_{b \in B} \Pr\left[Y_b > \frac{\Delta(b)}{\log^2 n}\right] \\
&= 1 - \prod_{b \in B} \left(1 - \frac{\Delta(b)}{n^{\frac{1}{3}} \log n}\right) \\
&\geq 1 - e^{-\sum_{b \in B} \frac{\Delta(b)}{n^{\frac{1}{3}} \log n}} \\
&= 1 - e^{-\frac{|\mathcal{P}'_2|}{n^{\frac{1}{3}} \log n}}\\
&\geq 1 - e^{-\Omega(\log^2 n)}.
\end{align*}
Consequently, we have $\Pr[X = 1] \geq \Pr [X = 1 \wedge Y = 1] \Pr[Y = 1]  \geq \left(1 - e^{-\Omega(\log n)}\right)^2.$
The lemma is proved.
\end{proof}

\subsection{Distributed Implementation}

We explain below the implementation details of the algorithm stated above in 
the CONGEST model.
\begin{itemize}
    \item (\textbf{Preprocessing}) In the algorithm stated above, the shortcut construction is performed only for large parts, which is crucial to bound the congestion of each edge. Thus, as a preprocessing task, each node has to 
    know if its own part is large (i.e. having a diameter larger than $\kappa_D$) or not. While the exact identification of the diameter 
    is usually a hard task, just an asymptotic identification is 
    sufficient for achieving the shortcut quality stated above, where 
    the parts of diameter $\omega(\kappa_D)$ and diameter $o(\kappa_D)$ must be
    identified as large and small ones, but those of diameter 
    $\Theta(\kappa_D)$ is identified arbitrarily. This loose identification is easily implemented by a simple distance-bounded aggregation. The algorithm 
    for part $P_i$ is that: (1)At the first round, each node in $P_i$ sends its ID to all the neighbors, and (2)in the following rounds, each node 
    forwards the minimum ID it received so far. The algorithm executes this message propagation during $\kappa_D$ rounds. If the diameter is (substantially) larger than $\kappa_D$, the minimum ID in $P_i$ does not 
    reach all the nodes in $P_i$. Then there exists an edge whose endpoints
    identify different minimum IDs. The one-more-round propagation allows
    those endpoints to know the part is large. Then they start to 
    broadcast the signal ``large'' using the following $\kappa_D$ rounds. 
    If $\kappa_D$ is large, the signal ``large'' is invoked at several nodes in $P_i$, and $\kappa_D$-round propagation guarantees that every node receives the signal. That is, any node in $P_i$ identifies that
    $P_i$ is large. The running time of this task is $O(\kappa_D)$ rounds.      
    \item (\textbf{Step 1}) As we stated, the 1-hop extension is implemented 
    in one round. In this step, each node $v \in V_{P_i}$ tells all the 
    neighbors if $P_i$ is large or not. Consequently, if part $P_i$ 
    is identified as a large one, all the nodes in $N^+(P_i)$ know it 
    after this step. 
    \item (\textbf{Step 2}) The algorithm for $D=3$ is trivial. For $D = 4$,
    there are two non-trivial matters. The first one is the preparation of
    hash function $h$. We realize it by sharing a random seed of 
    $O(n^{1/3}\log^3 n\log |\mathcal{Y}|)$-bit length in advance. 
    A standard construction by Wegman and Carter~\cite{WC81} allows  
    each node to construct the desired $h$ in common. Sharing the 
    random seed is implemented by the broadcast of one $O(n^{1/3}\log^3 n
    \log |\mathcal{Y}|)$-bit message, i.e., taking $\tilde{O}(\kappa_D)$ rounds. The second matter is to address the fact that $u$ does not know
    if $P_i$ is large or not, and/or if $v$ belongs to $N^+(P_i)$ or not. It makes
    $u$ difficult to determine if $(u, v)$ should be added to $H_i$ or not. 
    Instead, our algorithm simulates the task of $u$ by the nodes in 
    $N(u)$. More precisely, each node $v \in N^+(V_{P_i})$ adds each 
    incident edge $(u, v)$ to $H_i$ with probability $1/h(u, i)$. 
    Due to the fact of $v \in N^+(P_i)$, $v$ knows if $P_i$ is large or not
    (informed in step 1), and also can compute $h(u, i)$ locally. 
    Thus the choice of $(u, v)$ is locally decidable at $v$.
    Since this simulation is completely equivalent to the centralized 
    version, the analysis of the quality also applies.
\end{itemize}

It is easy to check that the construction time of the distributed implementation
above is $\tilde{O}(\kappa_D)$ in total.

\section{Low-Congstion Shortcut for Bounded Clique-width Graphs}
\label{sec:clique-width}
\label{sec:def_clique}
Let $G = (V, E)$ a graph. A $k$-graph ($k \geq 1$) is a graph whose vertices are labeled by integers in $[1, k]$. A $k$-graph is naturally defined as a triple
$(V, E, f)$, where $f$ is the labeling function $f : V \to [1, k]$. The 
clique-width of $G= (V, E)$ is the minimum $k$ such that there exists a
$k$-graph $G = (V, E, f)$ which is constructed by means of repeated application 
of the following
four operations: (1) introduce: create a graph of a single node $v$ with label 
$i \in [1,k]$, (2) disjoint union: take the union $G \cup H$ of two $k$-graphs $G$ and 
$H$, (3) relabel: given $i, j \in [1, k]$, change all the labels 
$i$ in the graph to $j$, and (4) join: given $i, j \in [1, k]$, 
connect all vertices labeled by $i$ with all vertices labeled by $j$ by edges.   

The clique-width is invented first as a parameter to capture the tractability for 
an easy subclass of high treewidth graphs~\cite{courcelle2000upper,corneil2005relationship}. 
That is, the class of bounded clique-width can contain many graphs with high treewidth. 
In centralized
settings, one can often obtain polynomial-time algorithms for many 
NP-complete problems under the assumption of bounded clique-width. 
The following negative result, however, states that bounding clique-width 
does not admit any good solution
for the MST problem (and thus also for the low-congestion shortcut). 
\begin{theorem}
\label{theo:clique}
There exists an unweighted $n$-vertex
graph $G = (V, E)$ of clique-width six where for any MST algorithm $A$ 
there exists an edge-weight function $w_A : E \to \mathbb{N}$ such that the 
running time of $A$ becomes $\tilde{\Omega}(\sqrt{n}+D)$ rounds.
\end{theorem}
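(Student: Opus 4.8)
The plan is to build on the lower-bound framework of Das Sarma et al. (Theorem~\ref{thm:communication_complexity}) exactly as was done for the $k$-chordal hardness result in Theorem~\ref{kchordalkakai}. That is, I will exhibit a concrete graph family that (i) lies in $\mathcal{G}(n, b, l, c)$ with parameters that force $\min\{b/c, l/2-1\} = \tilde{\Omega}(\sqrt{n})$, and (ii) has clique-width at most six. Since the hard-core instance $\mathcal{G}(O(lb),b,l,O(\log n))$ depicted in Figure~\ref{fig:lowerMST} already achieves the general $\tilde{\Omega}(\sqrt{n}+D)$ lower bound for MST with $c = O(\log n)$, the real work is not in re-deriving the communication-complexity bound but in verifying that (a suitable variant of) that same hard-core instance can be realized with bounded clique-width.

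**First I would** recall the structure of the standard $\tilde{\Omega}(\sqrt{n}+D)$ hard instance: it consists of $b = \tilde{\Theta}(\sqrt{n})$ parallel paths $Q_1, \dots, Q_b$ (each of length $l = \tilde{\Theta}(\sqrt{n})$) running between a source $s$ and a sink $r$, with $s$ and $r$ each adjacent to one endpoint of every path, together with a sparse ``highway'' structure of degree $O(\log n)$ that guarantees the small cut condition (\textbf{C3}) with $c = O(\log n)$ while keeping the diameter $D = O(\log n)$. I would then give an explicit clique-width expression (a $k$-expression with $k = 6$) constructing this graph. The natural approach is to build the paths incrementally column-by-column: at each ``layer'' of the construction, a constant number of labels suffices to (1) mark the current frontier vertices of all $b$ paths simultaneously, (2) introduce the next column of vertices, (3) \texttt{join} the frontier label to the new-column label to extend every path by one edge in parallel, and (4) \texttt{relabel} to advance the frontier. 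The highway/cut edges and the $s,r$ attachments must be woven in using the remaining labels.

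**The hard part will be** the careful bookkeeping of labels. The clique-width operations are global — a single \texttt{join} connects \emph{all} vertices of label $i$ to \emph{all} vertices of label $j$ — so to add the next edge of each path without accidentally creating spurious edges (which would change the graph, its cut values, or its diameter), one must ensure that at the moment of each \texttt{join} the two labels involved contain exactly the intended frontier and next-column vertices and nothing else. Reconciling this with the logarithmic-degree highway edges of the Das Sarma construction, while never exceeding six distinct labels at any point, is the delicate combinatorial core of the argument. I expect the proof to proceed by describing a periodic ``gadget'' $k$-expression that handles one layer using a fixed alphabet of six labels, arguing that the relabeling at the end of each layer resets the label configuration so the same gadget can be reapplied, and finally checking that the resulting graph is isomorphic to a member of $\mathcal{G}(n, b, l, O(\log n))$ with $b, l = \tilde{\Theta}(\sqrt{n})$; invoking Theorem~\ref{thm:communication_complexity} then yields the claimed $\tilde{\Omega}(\sqrt{n}+D)$ bound and completes the proof.
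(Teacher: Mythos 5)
Your overall strategy---reusing the framework of Theorem~\ref{thm:communication_complexity} and concentrating all effort on exhibiting a hard instance of clique-width six---is indeed the paper's strategy. But there is a genuine gap at precisely the step you dismiss as ``delicate bookkeeping'': realizing the \emph{standard} hard instance, with $b$ vertex-disjoint parallel paths, by a $6$-expression is not merely hard, it is impossible. The obstruction is structural. In a clique-width expression labels can only be merged (relabel), never split, and two vertices sharing a label at some point receive identical adjacencies in every join performed afterwards. Hence if the $b$ frontier vertices of your current column carry $O(1)$ labels, any sequence of joins connects them to the next column in complete-bipartite fashion; creating a perfect matching (``extend every path by one edge in parallel'') would force the frontier vertices to carry pairwise distinct labels, i.e.\ $\Omega(b)=\Omega(\sqrt{n})$ labels. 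One can also certify this non-constructively: the disjoint-paths-plus-sparse-highway instance contains no $K_{3,3}$ subgraph, and by the theorem of Gurski and Wanke, a graph of bounded clique-width with no $K_{t,t}$ subgraph has bounded treewidth---whereas any instance witnessing the $\tilde{\Omega}(\sqrt{n})$ MST bound must have treewidth $\tilde{\Omega}(\sqrt{n})$, since otherwise the treewidth-based shortcuts of Table~\ref{fig:result} would solve MST on it in $\tilde{O}(D)$ rounds. So no labeling scheme rescues the parallel-paths plan.

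The missing idea, which is the heart of the paper's proof, is to \emph{change the instance so that the bicliques produced by join are the intended edges}: in the paper's graph $G(\Gamma,p)$, consecutive columns are connected by complete bipartite graphs (the edge set $E_3$), not by a matching. Each ``row'' $Q_i$ remains connected, so condition (\textbf{C2}) survives, but the instance is no longer the standard one, and the cut condition (\textbf{C3}) must be re-verified from scratch; this is Lemma~\ref{clique-width_lowerbound}, which shows $G(\Gamma,p)\in\mathcal{G}(O(\Gamma(2^{p}+2)),\Gamma,2^{p}+2,3p)$ and then applies Theorem~\ref{thm:communication_complexity} with $\Gamma=\Theta(\sqrt{n})$ and $2^{p}=\Theta(\sqrt{n})$. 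A second, smaller divergence: the paper does not build the graph column-by-column with a periodic gadget but recursively, $G(\Gamma,p)=G(\Gamma,p-1)\oplus G(\Gamma,p-1)$, where each $\oplus$ joins the facing columns of the two halves and introduces one new apex joined to the two halves' apexes. This divide-and-conquer shape is what lets the $O(\log n)$-depth highway tree be realized with six labels; in a left-to-right sweep the $\Theta(\log n)$ pending roots of maximal complete subtrees would all need future edges to distinct, not-yet-introduced parents, so they would have to carry pairwise distinct labels, again exceeding any constant bound. In short, you identified the right framework, but the two ideas that make the clique-width bound true---redesigning the inter-column edges as bicliques and building the highway by recursion rather than by sweeping---are exactly what your proposal is missing.
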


We introduce the instance stated in this theorem, 
which is denoted by $G(\Gamma, p)$ ($\Gamma$ and $p$ are the parameters fixed later), using 
the operations specified in the definition of clique-width. That is, this introduction itself becomes
the proof of clique-width six.  
Let $\mathcal{G}(\Gamma)$ be the set of $6$-graphs that contains one node with 
label 1, $\Gamma$ nodes with label 2, and $\Gamma$ nodes label 3, and all other
nodes are labeled by 4. Then we define the binary operation $\oplus$ over 
$\mathcal{G}(\Gamma)$. For any $G, H \in \mathcal{G}(\Gamma)$, the graph 
$G \oplus H$ is defined as the one obtained by the following operations: 
(1) Relabel $2$ in $G$ with $5$ and relabel $3$ in $H$ with $6$, (2)
take the disjoint union $G \cup H$, (3) joins with labels $5$ and $6$, 
(4) relabel $5$ and $6$ with $4$, and then $1$ with $5$, (5) Add  
a node with label $1$ by operation introduce (6) join with 1 and 5, and (7) relabel 5 with 4. This process is 
illustrated in Figure~\ref{fig:construct}.
\begin{figure}[ht]
\centering
\includegraphics[width=100mm,keepaspectratio]{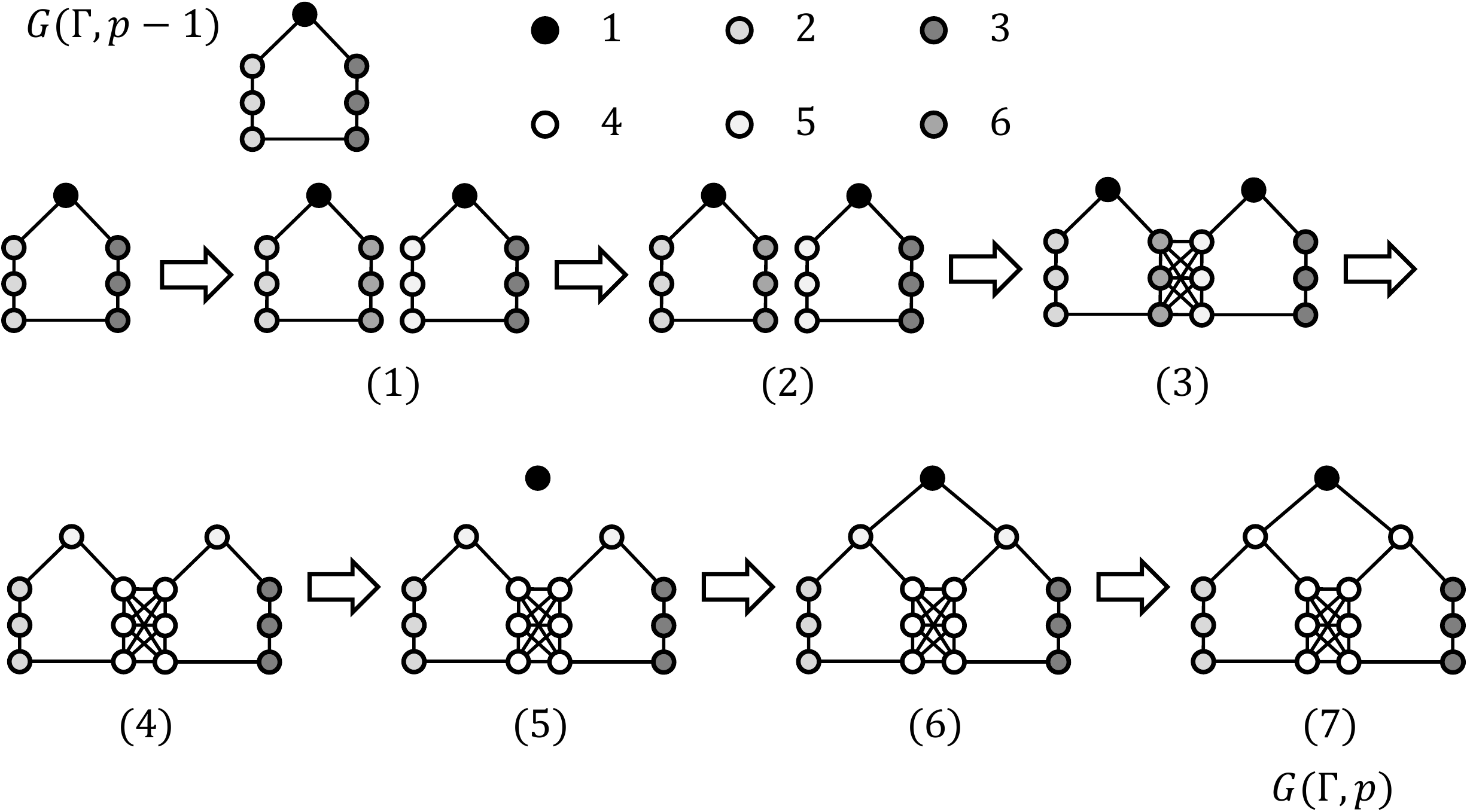}
\caption{Graph $G \oplus H$.}
\label{fig:construct}
\end{figure}

Now we are ready to define $G(\Gamma, p)$. The construction is recursive.
First, we define $G(\Gamma, 1)$ as follows: (1) Prepare a $(2\Gamma)$-biclique 
$K_{\Gamma, \Gamma}$ where one side has label 2, and the other side has label 3.
Note that two labels suffice to construct $K_{\Gamma, \Gamma}$. (2)
Add three nodes with label 1, 5, and 6 by operation introduce. (3) Join with label 2 and 5, 
and with 3 and 6. (4) Join with label 1 and 5, 
and with 1 and 6. (5) Relabel 5 and 6 with 4. Then, we define $G(\Gamma, p) = 
G(\Gamma, p-1) \oplus G(\Gamma, p- 1)$. The instance claimed in 
Theorem~\ref{theo:clique} is $G(\sqrt{n}, \log n /2)$, which is illustrated
in Figure~\ref{fig:clique}. This instance is very close to the standard
hard-core instance used in the prior work (e.g., \cite{PR00,lowerbound}. See 
Figure~\ref{fig:lowerMST}). Thus it is not difficult to see that 
$\tilde{\Omega}(\sqrt{n})$-round lower bound for the MST construction 
also applies to $G(\sqrt{n}, \log n / 2)$. It suffices to show that the following lemma. Combined with Theorem~\ref{thm:communication_complexity}, we obtain Theorem~\ref{theo:clique}.
\begin{lemma}
\label{clique-width_lowerbound}
$G(\Gamma,p) \in \mathcal{G}(O(\Gamma(2^{p}+2)),\Gamma,2^{p}+2,3p)$.
\end{lemma}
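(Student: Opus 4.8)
The plan is to exhibit explicit partitions $\mathcal{X}$ and $\mathcal{Q}$ witnessing membership in $\mathcal{G}(n,\Gamma,2^{p}+2,3p)$ and to read the vertex count off the recursion. First I would unfold $G(\Gamma,p)=G(\Gamma,p-1)\oplus G(\Gamma,p-1)$ into its ``flat'' picture. Inspecting $\oplus$ shows it is a \emph{series} composition: it welds the label-$2$ side of the left operand to the label-$3$ side of the right operand by a complete biclique and introduces one fresh top node joined to the two old tops. Hence $G(\Gamma,p)$ is $2^{p-1}$ copies of the base gadget $G(\Gamma,1)$ placed left to right, whose $\Gamma$-node interface sets form $2^{p}$ consecutive \emph{columns} $K_{1},\dots,K_{2^{p}}$, each consecutive pair joined by a complete biclique. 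Above the columns sits a \emph{highway}: the balanced binary tree $\mathcal{T}$ of the $2^{p}-1$ top nodes (its leaves $t^{(1)},\dots,t^{(2^{p-1})}$, one per gadget) together with the $2^{p}$ intermediary nodes, the two intermediaries of gadget $k$ being joined to all of its two columns and to $t^{(k)}$. The count obeys $V(p)=2V(p-1)+1$ with $V(1)=2\Gamma+3$, giving $V(p)=2^{p}(\Gamma+2)-1=O(\Gamma(2^{p}+2))$, which fixes the first parameter.

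Next I would define the partitions. For $\mathcal{X}$ I take $X_{1}=\{s\}$ and $X_{2^{p}+2}=\{r\}$, where $s$ is the leftmost intermediary (adjacent to all of $K_{1}$) and $r$ the rightmost (adjacent to all of $K_{2^{p}}$); these are singletons and give \textbf{C1} with $\ell=2^{p}+2$. Each column $K_{j}$ goes in its own middle layer $X_{j+1}$; each remaining intermediary is placed in the layer of the column it dominates; and each top of $\mathcal{T}$ is placed in the layer of the \emph{midpoint} column of the gadget span it governs (a segment-tree embedding). For $\mathcal{Q}$ I let $Q_{i}$ ($2\le i\le\Gamma$) be the $i$-th node of every column, connected because consecutive columns form complete bicliques, and I absorb \emph{all} tops and \emph{all} non-$s,r$ intermediaries into $Q_{1}$ together with the first node of every column; $Q_{1}$ is connected since each intermediary is adjacent to a $Q_{1}$ column node, each leaf top is adjacent to its gadget's intermediaries, and $\mathcal{T}$ is connected. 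This yields $b=\Gamma$ connected parts partitioning $V\setminus\{s,r\}$, and since $s$ (resp.\ $r$) is adjacent to all of the leftmost (resp.\ rightmost) column it meets every $Q_{i}$, establishing \textbf{C2}.

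It remains to verify \textbf{C3} with $c=3p$, which is the crux. The key structural observation is that every column--column biclique edge joins two consecutive layers and therefore can never lie in $E(R_{i},N(R_{i})\setminus R_{i-1})$, since an edge counted there has endpoints in $X_{>i}$ and $X_{<i}$ and thus skips the whole layer $X_{i}$; likewise each intermediary, placed in its column's layer, contributes only within- or adjacent-layer edges. So the only long-range crossings come from the tree edges of $\mathcal{T}$ plus the $O(1)$-local intermediary--top and $s,r$ edges. To bound the tree edges I would use the segment-tree nature of the midpoint embedding: at each of the $p-1$ levels of $\mathcal{T}$ the parent--child edges have pairwise disjoint column-intervals (each child interval lies inside its parent's, and the two siblings are separated by the parent's midpoint), so at most one edge per level crosses any fixed boundary. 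Summing over levels bounds the tree contribution by $p-1$, and adding the constantly many local edges incident to the gadget straddling that boundary keeps the total within $3p$; by symmetry the $L_{i}$ side is identical, so checking $R_{i}$ for $2\le i\le\ell/2-1$ suffices. The main obstacle I anticipate is exactly this highway bound: one must choose the top-node layer assignment so that the $\Theta(2^{p})$ tree edges, although individually spanning up to half the line, never pile up more than one-per-level at any single boundary, and one must carefully verify that the intermediary--top and endpoint edges are \emph{absorbed} into this $O(p)$ count rather than inflating it past $3p$.
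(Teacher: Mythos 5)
Your proposal is correct and follows essentially the same route as the paper: the witnesses coincide ($\mathcal{Q}$ is the rows with the whole tree and the non-$s,r$ intermediaries absorbed into $Q_1$; $\mathcal{X}$ layers the columns consecutively with the two extreme intermediaries as the singletons $s$ and $r$), and the crux in both cases is that only tree edges can skip a layer, bounded by $O(p)$. The sole difference is a detail of bookkeeping: the paper places each internal tree node in the layer of the \emph{rightmost} column of its subtree's span and bounds the skipping edges by $3$ per level (the at most $3$ tree neighbors of the leftmost level-$i$ vertex inside $R_\ell$), whereas your midpoint (segment-tree) placement gives at most one skipping edge per level; both land comfortably under $c = 3p$.
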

    \begin{proof}
    First, let us formally specify the graph $G(\Gamma, p)$, which is defined
    as follows (vertex IDs introduced below are described in Figure~\ref{fig:clique}): 
        \begin{itemize}
            \item $V(\Gamma,p)=T\cup \bigcup_{1\leq l \leq \Gamma} V^{l}$ such that
            $T=\{u^{j}_{i} \mid 0 \leq i \leq 2^{p}-1, 0 \leq j \leq p\}$,
            $V^{l}=\{v^{l}_{i} \mid 0 \leq i \leq 2^{p}-1\}$.
            \item $E(\Gamma,p)=E_{1}\cup E_{2}\cup E_{3}$ such that
            $E_{1}=\{(u^{j}_{i},u^{j-1}_{\lfloor \frac{i}{2} \rfloor}) \mid 0 \leq i \leq 2^{j}-1,1\leq j \leq p\}$,
            $E_{2}=\{(u^{p}_{i},v^{j}_{i})\mid 0 \leq i \leq 2^{p}-1, 1 \leq j \leq \Gamma\}$,
            $E_{3}=\{(v^{j}_{i},v^{k}_{i+1})\mid 0 \leq i \leq 2^{p}-2, 1\leq j \leq \Gamma,1\leq k \leq \Gamma\}$.
            \end{itemize}
            
    We define $\mathcal{X}$ and $\mathcal{Q}$ for graph $G(\Gamma,p)$ 
    as follows:
    \begin{align*}
    \mathcal{X} &= \{X_1,X_2,\dots,X_{2^{p}+2}\} \ \ \text{s.t.} \\ 
    X_{i}&=
    \begin{cases}
            \left\{u^{p}_{0}\right\}& \hspace*{12mm} \text{$\left(i=1\right)$}\\
            \left\{v^{j}_{0}\mid 1\leq j \leq \Gamma\right\}& \hspace*{12mm} \text{$\left(i=2\right)$}\\
            \left\{v^{j}_{i-2}\mid 2\leq j \leq N\right\}\cup \left\{u^{p-j}_{\frac{i-1}{2^{j}}-1}\mid 0\leq j \leq p, i-1\bmod{2^{j}}=0  \right\}& \hspace*{12mm} \text{$\left(3\leq i \leq 2^{p}-1 \right)$}\\
            \left\{v^{j}_{2^{p}-2}\mid 1\leq j \leq \Gamma\right\}\cup \left\{u^{p}_{2^{p}-2}\right\}\cup \left\{u^{j}_{2^{j}-1} \mid 0\leq j \leq p-1 \right\}& \hspace*{12mm} \text{$\left(i=2^{p}\right)$}\\
            \left\{v^{j}_{2^{p}-1}\mid 1\leq j \leq \Gamma\right\}& \hspace*{12mm} \text{$\left(i=2^{p}+1\right)$}\\
            \left\{u^{p}_{2^{p}-1}\right\}& \hspace*{12mm} \text{$\left(i=2^{p}+2\right)$}.
    \end{cases}\\
    \end{align*}
    \begin{align*}
    \mathcal{Q} &= \{Q_1,Q_2,\dots,Q_{\Gamma}\} \ \ \text{s.t.} \\ 
    Q_{i}&=
    \begin{cases}
            V_{1}\cup \left(T\backslash (s \cup r)\right)& \hspace*{12mm} \text{$\left(i=1\right)$}\\
            V_{i}& \hspace*{12mm} \text{$\left(2\leq i\leq \Gamma\right)$}.
    \end{cases}\\
    \end{align*}
    It is easy to check (\textbf{C1}) and (\textbf{C2}) is satisfied.
    Thus we only show that (\textbf{C3}) is satisfied.
    Let $V_{R_{i}}=R_{i} \cap \bigcup_{j=1}^{\Gamma}V_j$.
    For $2\leq i\leq (2^{p}+2)/2$, we have $(N(V_{R_{i}})\backslash R_{i-1}) = \emptyset$.
    For any $\ell$ and $1\leq i \leq 2^{p-2}$, if $u^{p}_{i}$ is included in $R_{\ell}$, then the neighbors of $u^{p}_{i}$ is included in $R_{\ell}$.
    For any $\ell$, $1\leq i \leq p$ and  $0\leq j \leq 2^{i}-2$, if $u^{i}_{j}$ is included in $R_{\ell}$, then $u^{i}_{j+1}$ is included in $R_{\ell}$.
    Let $u^{i}(R_{\ell})$ be leftmost vertex which level is $i$ of $T$ and included in $R_{\ell}$.
    For any $\ell$, $1\leq i \leq p$ and $0\leq j \leq 2^{i}-1$, if $u^{i}_{j}\neq u^{i}(R_{\ell})$ and $u^{i}_{j}$ is included in $R_{\ell}$, then the parent of $u^{i}_{j}$ is included in $R_{\ell}$.
    Thus $|(N(R_\ell)\backslash R_{\ell-1})|$ only includes neighbors of $u^{i}(R_{\ell})$ for $1\leq i \leq p$ and $2\leq \ell\leq (2^{p}+2)/2$.
    Since the tree $T$ is binary tree, $u^{i}(R_{\ell})$ has at most 3 neighbors in $T$.
    Therefore we have $|E\left((N(R_i)\backslash R_{i-1}) \right)|\leq 3p$.
    Similarly, we have $|E\left((N(L_i)\backslash L_{i-1}) \right)|\leq 3p$.
    Therefore we can prove that the graph $G(\Gamma,p)$ is included in $\mathcal{G}(O(\Gamma(2^{p}+2)),\Gamma,2^{p}+2,3p)$.
    By Theorem~\ref{thm:communication_complexity}, the lower bound of constructing MST in $\mathcal{G}(O(\Gamma(2^{p}+2)),\Gamma,2^{p}+2,3p)$ is $\tilde{\Omega}((\min\{\Gamma/3p,\left((2^{p}+2\right)/2-1\})$. When $\Gamma=\Theta(\sqrt{n})$ and $2^{p}=\Theta(\sqrt{n})$, we obtain the $\tilde{\Omega}(\sqrt{n})$ lower bound.
\end{proof}

\begin{figure}[ht]
\centering
\includegraphics[width=100mm,keepaspectratio]{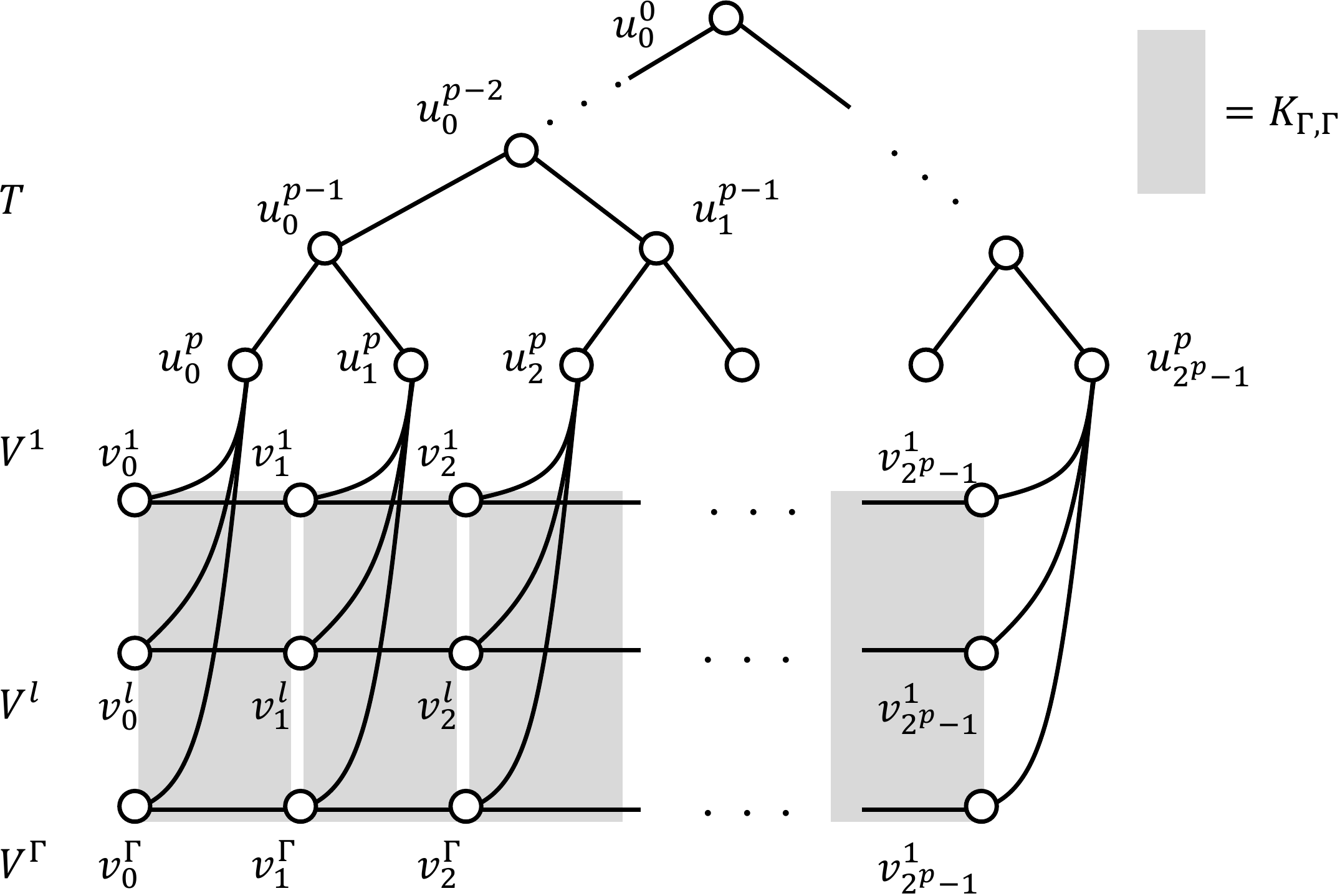}
\caption{Example of clique-width 6 graph $G(\Gamma,p)$.}
\label{fig:clique}
\end{figure}
%

%
%
%
%
%
%
%
%

\section{Conclusion}
\label{sec:conclusion}
In this paper, we have shown the upper and lower bounds for the
round complexity of shortcut
construction and MST in $k$-chordal graphs, diameter-three or four graphs, 
and bounded clique-width graphs. We presented an $O(1)$-round algorithm
constructing an optimal $O(kD)$-quality shortcut for any $k$-chordal graphs. 
We also presented the algorithms of constructing optimal low-congestion shortcuts with quality $\tilde{O}(\kappa_D)$ in $\tilde{O}(\kappa_D)$ rounds for $D=3$ 
and $4$, which yield the optimal algorithms for MST matching the known lower 
bounds by Lotker et al.~\cite{LPP06}. On the negative side, $O(1)$-clique-width
does not allow us to have good shortcuts. 
We conclude this paper posing three related open problems. (1) Can we have 
good shortcuts for $D \geq 5$? (2) Can we have 
good shortcuts for $k$-clique width where $k\leq 5$? (3) While bounded clique-width does not contribute to solving MST efficiently, it seems to provide many edge-disjoint paths (not necessarily so short). Can we find any problem that can uses the benefit of bounded clique-width?
%
%
\section*{Acknowledgements}
This work was supported by JSPS KAKENHI Grant Numbers JP18H04091, JP18K11168, JP18K11169, JP19K11824, and JP19J22696, and JST SICORP Grant Number JPMJSC1606, Japan.
%
%

\bibliography{reference}

\begin{thebibliography}{10}

\bibitem{abboud2016near}
Amir Abboud, Keren Censor{-}Hillel, and Seri Khoury.
\newblock Near-linear lower bounds for distributed distance computations, even
  in sparse networks.
\newblock In {\em Proceedings of 30nd International Symposium on Distributed
  Computing (DISC)}, pages 29--42, 2016.
\newblock \href {http://dx.doi.org/10.1007/978-3-662-53426-7_3}
  {\path{doi:10.1007/978-3-662-53426-7_3}}.

\bibitem{AGLP89}
Baruch Awerbuch, Andrew~V. Goldberg, Michael Luby, and Serge~A. Plotkin.
\newblock Network decomposition and locality in distributed computation.
\newblock In {\em Proceedings of 30th Annual Symposium on Foundations of
  Computer Science (FOCS)}, pages 364--369, 1989.
\newblock \href {http://dx.doi.org/10.1109/SFCS.1989.63504}
  {\path{doi:10.1109/SFCS.1989.63504}}.

\bibitem{corneil2005relationship}
Derek~G. Corneil and Udi Rotics.
\newblock On the relationship between clique-width and treewidth.
\newblock {\em SIAM Journal on Computing}, pages 825--847, 2005.
\newblock \href {http://dx.doi.org/10.1137/S0097539701385351}
  {\path{doi:10.1137/S0097539701385351}}.

\bibitem{courcelle2000upper}
Bruno Courcelle and Stephan Olariu.
\newblock Upper bounds to the clique width of graphs.
\newblock {\em Discrete Applied Mathematics}, pages 77--114, 2000.
\newblock \href {http://dx.doi.org/10.1016/S0166-218X(99)00184-5}
  {\path{doi:10.1016/S0166-218X(99)00184-5}}.

\bibitem{Elkin04}
Michael Elkin.
\newblock Distributed approximation: a survey.
\newblock {\em ACM SIGACT News}, pages 40--57, 2004.
\newblock \href {http://dx.doi.org/10.1145/1054916.1054931}
  {\path{doi:10.1145/1054916.1054931}}.

\bibitem{elkin2006unconditional}
Michael Elkin.
\newblock An unconditional lower bound on the time-approximation trade-off for
  the distributed minimum spanning tree problem.
\newblock {\em SIAM Journal on Computing}, pages 433--456, 2006.
\newblock \href {http://dx.doi.org/10.1137/S0097539704441058}
  {\path{doi:10.1137/S0097539704441058}}.

\bibitem{GHS83}
Robert~G. Gallager, Pierre~A. Humblet, and Philip~M. Spira.
\newblock A distributed algorithm for minimum-weight spanning trees.
\newblock {\em ACM Transactions on Programming Languages and Systems (TOPLAS)},
  pages 66--77, 1983.
\newblock \href {http://dx.doi.org/10.1145/357195.357200}
  {\path{doi:10.1145/357195.357200}}.

\bibitem{garay1998sublinear}
Juan~A. Garay, Shay Kutten, and David Peleg.
\newblock A sublinear time distributed algorithm for minimum-weight spanning
  trees.
\newblock {\em SIAM Journal on Computing}, pages 302--316, 1998.
\newblock \href {http://dx.doi.org/10.1137/S0097539794261118}
  {\path{doi:10.1137/S0097539794261118}}.

\bibitem{gavril1974intersection}
F\v{a}nic\v{a} Gavril.
\newblock The intersection graphs of subtrees in trees are exactly the chordal
  graphs.
\newblock {\em Journal of Combinatorial Theory, Series B}, pages 47--56, 1974.
\newblock \href {http://dx.doi.org/10.1016/0095-8956(74)90094-X}
  {\path{doi:10.1016/0095-8956(74)90094-X}}.

\bibitem{Ghaffari15}
Mohsen Ghaffari.
\newblock Near-optimal scheduling of distributed algorithms.
\newblock In {\em Proceedings of the 2015 ACM Symposium on Principles of
  Distributed Computing (PODC)}, pages 3--12, 2015.
\newblock \href {http://dx.doi.org/10.1145/2767386.2767417}
  {\path{doi:10.1145/2767386.2767417}}.

\bibitem{GH16}
Mohsen Ghaffari and Bernhard Haeupler.
\newblock Distributed algorithms for planar networks {II:} low-congestion
  shortcuts, mst, and min-cut.
\newblock In {\em Proceedings of the twenty-seventh annual ACM-SIAM symposium
  on Discrete algorithms (SODA)}, pages 202--219, 2016.
\newblock \href {http://dx.doi.org/10.1137/1.9781611974331.ch16}
  {\path{doi:10.1137/1.9781611974331.ch16}}.

\bibitem{ghaffari2018distributed}
Mohsen Ghaffari and Fabian Kuhn.
\newblock Distributed {MST} and broadcast with fewer messages, and faster
  gossiping.
\newblock In {\em Proceedings of 32nd International Symposium on Distributed
  Computing (DISC)}, pages 30:1--30:12, 2018.
\newblock \href {http://dx.doi.org/10.4230/LIPIcs.DISC.2018.30}
  {\path{doi:10.4230/LIPIcs.DISC.2018.30}}.

\bibitem{ghaffari2017distributed}
Mohsen Ghaffari, Fabian Kuhn, and Hsin{-}Hao Su.
\newblock Distributed {MST} and routing in almost mixing time.
\newblock In {\em Proceedings of 31nd International Symposium on Distributed
  Computing (DISC)}, pages 131--140, 2017.
\newblock \href {http://dx.doi.org/10.1145/3087801.3087827}
  {\path{doi:10.1145/3087801.3087827}}.

\bibitem{GL18}
Mohsen Ghaffari and Jason Li.
\newblock New distributed algorithms in almost mixing time via transformations
  from parallel algorithms.
\newblock In {\em Proceedings of 32nd International Symposium on Distributed
  Computing (DISC)}, pages 31:1--31:16, 2018.
\newblock \href {http://dx.doi.org/10.4230/LIPIcs.DISC.2018.31}
  {\path{doi:10.4230/LIPIcs.DISC.2018.31}}.

\bibitem{GP2018}
Robert Gmyr and Gopal Pandurangan.
\newblock Time-message trade-offs in distributed algorithms.
\newblock In {\em Proceedings of 32nd International Symposium on Distributed
  Computing (DISC)}, pages 32:1--32:18, 2018.
\newblock \href {http://dx.doi.org/10.4230/LIPIcs.DISC.2018.32}
  {\path{doi:10.4230/LIPIcs.DISC.2018.32}}.

\bibitem{haeupler2018round}
Bernhard Haeupler, D.~Ellis Hershkowitz, and David Wajc.
\newblock Round- and message-optimal distributed graph algorithms.
\newblock In {\em Proceedings of the 2018 ACM Symposium on Principles of
  Distributed Computing (PODC)}, pages 119--128, 2018.
\newblock \href {http://dx.doi.org/10.1145/3212734.3212737}
  {\path{doi:10.1145/3212734.3212737}}.

\bibitem{HIZ16}
Bernhard Haeupler, Taisuke Izumi, and Goran Zuzic.
\newblock Low-congestion shortcuts without embedding.
\newblock In {\em Proceedings of the 2016 ACM Symposium on Principles of
  Distributed Computing (PODC)}, pages 451--460, 2016.
\newblock \href {http://dx.doi.org/10.1145/2933057.2933112}
  {\path{doi:10.1145/2933057.2933112}}.

\bibitem{HIZ16-2}
Bernhard Haeupler, Taisuke Izumi, and Goran Zuzic.
\newblock Near-optimal low-congestion shortcuts on bounded parameter graphs.
\newblock In {\em Proceedings of 30nd International Symposium on Distributed
  Computing (DISC)}, pages 158--172, 2016.
\newblock \href {http://dx.doi.org/10.1007/978-3-662-53426-7_12}
  {\path{doi:10.1007/978-3-662-53426-7_12}}.

\bibitem{H18}
Bernhard Haeupler and Jason Li.
\newblock Faster distributed shortest path approximations via shortcuts.
\newblock In {\em Proceedings of 32nd International Symposium on Distributed
  Computing (DISC)}, pages 33:1--33:14, 2018.
\newblock \href {http://dx.doi.org/10.4230/LIPIcs.DISC.2018.33}
  {\path{doi:10.4230/LIPIcs.DISC.2018.33}}.

\bibitem{haeupler2018minor}
Bernhard Haeupler, Jason Li, and Goran Zuzic.
\newblock Minor excluded network families admit fast distributed algorithms.
\newblock In {\em Proceedings of the 2018 ACM Symposium on Principles of
  Distributed Computing (PODC)}, pages 465--474, 2018.
\newblock \href {http://dx.doi.org/10.1145/3212734.3212776}
  {\path{doi:10.1145/3212734.3212776}}.

\bibitem{jurdzinski2018mst}
Tomasz Jurdzinski and Krzysztof Nowicki.
\newblock {MST} in \emph{O}(1) rounds of congested clique.
\newblock In {\em Proceedings of the Twenty-Ninth Annual ACM-SIAM Symposium on
  Discrete Algorithms (SODA)}, pages 2620--2632, 2018.
\newblock \href {http://dx.doi.org/10.1137/1.9781611975031.167}
  {\path{doi:10.1137/1.9781611975031.167}}.

\bibitem{SD98}
Shay Kutten and David Peleg.
\newblock Fast distributed construction of small \emph{k}-dominating sets and
  applications.
\newblock {\em Journal of Algorithms}, pages 40--66, 1998.
\newblock \href {http://dx.doi.org/10.1006/jagm.1998.0929}
  {\path{doi:10.1006/jagm.1998.0929}}.

\bibitem{li2018distributed}
Jason Li.
\newblock Distributed treewidth computation.
\newblock {\em arXiv}, 2018.
\newblock \href {http://arxiv.org/abs/1805.10708} {\path{arXiv:1805.10708}}.

\bibitem{LPP06}
Zvi Lotker, Boaz Patt{-}Shamir, and David Peleg.
\newblock Distributed {MST} for constant diameter graphs.
\newblock {\em Distributed Computing}, pages 453--460, 2006.
\newblock \href {http://dx.doi.org/10.1007/s00446-005-0127-6}
  {\path{doi:10.1007/s00446-005-0127-6}}.

\bibitem{ookawa2015filling}
Hiroaki Ookawa and Taisuke Izumi.
\newblock Filling logarithmic gaps in distributed complexity for global
  problems.
\newblock In {\em Proccedings of 41st International Conference on Current
  Trends in Theory and Practice of Informatics (SOFSEM)}, pages 377--388, 2015.
\newblock \href {http://dx.doi.org/10.1007/978-3-662-46078-8_31}
  {\path{doi:10.1007/978-3-662-46078-8_31}}.

\bibitem{pal2014intersection}
Madhumangal Pal.
\newblock Intersection graphs: An introduction.
\newblock {\em arXiv}, 2014.
\newblock \href {http://arxiv.org/abs/1404.5468} {\path{arXiv:1404.5468}}.

\bibitem{pandurangan2017time}
Gopal Pandurangan, Peter Robinson, and Michele Scquizzato.
\newblock A time- and message-optimal distributed algorithm for minimum
  spanning trees.
\newblock In {\em Proceedings of the 49th Annual ACM SIGACT Symposium on Theory
  of Computing (STOC)}, pages 743--756, 2017.
\newblock \href {http://dx.doi.org/10.1145/3055399.3055449}
  {\path{doi:10.1145/3055399.3055449}}.

\bibitem{pandurangan2018distributed}
Gopal Pandurangan, Peter Robinson, and Michele Scquizzato.
\newblock The distributed minimum spanning tree problem.
\newblock {\em Bulletin of the European Association for Theoretical Computer
  Science (EATCS)}, 2018.
\newblock URL: \url{http://eatcs.org/beatcs/index.php/beatcs/article/view/538}.

\bibitem{PR00}
David Peleg and Vitaly Rubinovich.
\newblock A near-tight lower bound on the time complexity of distributed
  minimum-weight spanning tree construction.
\newblock {\em SIAM Journal on Computing}, pages 1427--1442, 2000.
\newblock \href {http://dx.doi.org/10.1137/S0097539700369740}
  {\path{doi:10.1137/S0097539700369740}}.

\bibitem{lowerbound}
Atish~Das Sarma, Stephan Holzer, Liah Kor, Amos Korman, Danupon Nanongkai,
  Gopal Pandurangan, David Peleg, and Roger Wattenhofer.
\newblock Distributed verification and hardness of distributed approximation.
\newblock In {\em Proceedings of the 43th Annual ACM SIGACT Symposium on Theory
  of Computing (STOC)}, pages 363--372, 2011.
\newblock \href {http://dx.doi.org/10.1145/1993636.1993686}
  {\path{doi:10.1145/1993636.1993686}}.

\bibitem{WC81}
Mark~N. Wegman and Larry Carter.
\newblock New hash functions and their use in authentication and set equality.
\newblock {\em Journal of Computer and System Sciences}, pages 265--279, 1981.
\newblock \href {http://dx.doi.org/10.1016/0022-0000(81)90033-7}
  {\path{doi:10.1016/0022-0000(81)90033-7}}.

\end{thebibliography}
\end{document}